\newtheorem{theorem}{Theorem}
\DeclareRobustCommand*{\IEEEauthorrefmark}[1]{%
  \raisebox{0pt}[0pt][0pt]{\textsuperscript{\footnotesize\ensuremath{#1}}}}
\begin{document}

\date{}


\title{Enhancing Cloud Storage with Shareable Instances for Social Computing}

\author{

\IEEEauthorblockN{
Ying Mao\IEEEauthorrefmark{1},
Peizhao Hu\IEEEauthorrefmark{2},Member, IEEE}
\IEEEcompsocitemizethanks{\IEEEcompsocthanksitem Ying Mao\IEEEauthorrefmark{1} is with the Department of Computer and Information Science, Fordham University, New York, 10458, Email: ymao41@fordham.edu
Email: maoy@tcnj.edu
\IEEEcompsocthanksitem Peizhao Hu\IEEEauthorrefmark{2} is with Department of Computer Science, Rochester Institute of Technology, Rochester, NY,14623, Email: ph@cs.rit.edu
}
}

\IEEEtitleabstractindextext{
\begin{abstract}
Cloud storage plays an important role in social computing. This paper aims to develop a cloud storage management system for mobile
devices to support an extended set of file operations. 
Because of the limit of storage, bandwidth, power consumption and other resource restrictions, most
existing cloud storage apps for smartphones do not keep local copies of
files. This efficient design, however, limits the application capacities.
In this paper, we attempt to extend the available file operations for
cloud storage service to better serve smartphone users. We develop an efficient and 
secure file management system, Skyfiles, to support
more advanced file operations. The basic idea of our design  is to utilize cloud
instances to assist file operations. Particularly, Skyfiles supports
downloading, compressing, encrypting, and converting operations, as well as file transfer between
two smartphone users' cloud storage spaces. In addition, we propose a
protocol for users to share their idle instances. All file operations
supported by Skyfiles can be efficiently and securely accomplished with
either a self-created instance or shared instance. 
\end{abstract}
\begin{IEEEkeywords}
Cloud Storage, Cloud File Management, Mobile Devices.
\end{IEEEkeywords}}

\maketitle

\section{Introduction}
With recent advances, smartphones have become the most revolutionary devices nowadays. According to a Pew's survey made in 2015, about $68\%$ of U.S. consumers own a smartphone. Consequently, a variety of applications
have been developed to meet users' all kinds of demands (more than 700,000 apps
in both Apple Store and Google Play Store). With the recent development of cloud computing~\cite{chen2020woa, mao2020resource, mao2017draps},
edge computing~\cite{acharya2019edge, acharya2019workload, harvey2017edos} and deep learning~\cite{fu2019progress, zheng2019flowcon, zheng2019target}, today's smartphones have gone far beyond a mobile telephone as they have seamlessly dissolved in people's daily life in every perspective. Among various applications, social networking apps play an important role. 
The concepts behind ``social networking'' aren't anything new, we have been looking for ways to connect, network, and promote with each other ever since there have been humans. In the age of smartphones, where we used to have handshakes, word-of-mouth referrals, and stamped letters, today’s relationships are often begun and developed on Twitter~\cite{Twitter}, Google+\cite{gg}, and Facebook~\cite{fb}. The spirit of mobile social networks is data sharing.
However, the mobile devices have encountered specific challenges due to the limitations on these devices. 
First, the storage capacity of smartphones or Tablets is limited compared to
regular desktops and laptops. Second, the network bandwidth of the cellular network is limited. At this point, major U.S. mobile networks carriers rarely provide unlimited data plans with full speed and the service scalability is limited by fundamental constraints. Finally, energy consumption is a critical issue for these devices.

In this paper, we consider the applications of the cloud
storage service that is a major approach to extend the storage space for smartphones. Representative products
include iCloud~\cite{icloud}, Dropbox~\cite{dropbox}, Box.com~\cite{box}, Google Drive~\cite{googledrive},
and others~\cite{skydrive,ubuntuone}. Basically, each user holds a certain remote storage space in cloud and
can access the files from different devices through the Internet. Synchronization and file consistence are guaranteed
in these cloud storage services. 
To address the above limitations on these devices, the most existing applications
for cloud storage service follow one important design principle---not keeping local copies of the files
stored in cloud. Because smartphones or IoTs may not have sufficient space to hold all the files,
and because downloading those files consumes a lot of bandwidth and battery power,
only meta data is kept on the devices by default, instead. Though this design is efficient,
it limits the capabilities of the apps. Some file operations that can be easily done with local
copies become extremely hard for
smartphone users, e.g., compressing files and transferring files to another user.

In this paper, we develop Skyfiles~(extended \cite{mao2014skyfiles}), a system for smartphone and IoT users to manage their
files in cloud storage. Our basic idea is to launch a cloud instance to assist users
to accomplish the file operations. It is motivated by the fact that the cloud instance
is inexpensive (sometimes free). For example, Amazon Web Service (AWS)~\cite{aws} provides 750
free Micro instance hours per month. By delegating tasks to a cloud instance, users can
address the above constraints in terms of storage space, bandwidth usage, and energy consumption for
file operations. Our Skyfiles system does not store local copies of files on smartphones, but possesses the following new features:

\begin{itemize}
\item
It extends the available file operations for mobile devices to a
more enriched set of operations including downloading, compressing, encrypting, and converting operations;
\item
It includes a protocol for two smartphone users to transfer files from one's cloud storage space to the other's cloud storage;
\item
It includes a secure solution for all of above operations to use shared cloud instances, i.e., shared instances created by other users.
\end{itemize}

The rest of this paper is organized as follows. Section~\ref{sec:rel}
overviews the related work and Section~\ref{sec:back} introduces background
information about cloud storage service. In Section~\ref{sec:sol},
we present the basic architecture of Skyfiles.
Section~\ref{sec:sol2} includes detailed design of the file operations in Skyfiles.
We evaluate the performance of Skyfiles in Section~\ref{sec:eval} and conclude in Section~\ref{sec:conclude}.


\section{Related Work}
\label{sec:rel}

In the past few years, cloud computing and storage technology gain more and more attention in mobile applications.
In~\cite{Survey}, a survey studies offloading
computation to the cloud side with the objective of extending smartphone battery life.
MAUI\cite{MAUI}, Cuckoo\cite{mobicase_2010} and ThinkAir\cite{ThinkAir} implement an Android framework on 
top of the existing runtime system. These three systems are easy
to deploy because they only need to access to the program source codes, and they do not require modifying the existing operating system. 
They provide a dynamic runtime system, which can, at runtime, decide whether a part of
an application is better to be executed locally or remotely.
Besides offloading computation for the mobile devices, 
SmartDiet\cite{SmartDiet} aims at offloading communication-related tasks to cloud in order to save energy of smartphones.
Authors in~\cite{mao2016mobile} study a message sharing scheme without infrastructure support.

In addition to offloading services, another major utilization of cloud computing for mobile users is to enrich the storage space on smartphones.
CloudCmp~\cite{CloudCmp} studies on a comparison about user achieved performance among different storage
providers by running benchmarks.
In~\cite{IWQoS_2012}, the authors focus on the impact of virtualization on Dropbox-like cloud storage systems. 
Focusing on Amazon Web Service, a cloud storage infrastructure provider, ~\cite{aws-study} shows that the perceived performance at the client 
is primarily dependent on the client's network capabilities and the network performance between client and AWS data center.
Furthermore, the authors of~\cite{Inside_dropbox} investigate Dropbox users to
understand characteristics of personal cloud storage services on mobile device. Their results show possible
performance bottlenecks caused by both the current system architecture and the storage protocol.

While cross-platform cloud storage services provide great usability, the increasing
amount of personal or confidential data entrusted to these
services poses a growing risk to privacy~\cite{s-p,proof,Wizards,control}.
There are some prior works attempting to use cloud computing to enhance security and privacy of mobile devices.
Authors in~\cite{privacy-survey} provide a survey to explore the security and privacy issues in cloud computing.
CloudShield\cite{CloudShield} presents
an efficient anti-malware smartphone patching with a P2P network on the cloud. CloudShield can stop worm spreading 
between smartphones by using cloud computing.
Clone2Clone~\cite{Clone2Clone} uses device-clones
hosted in various virtualization environments in both private
and public cloud to boost mobile performance and strengthen its security. 
The authors of \cite{caas} propose the Confidentiality as a Service (CaaS) paradigm to provide usable confidentiality
and integrity for the bulk of users who think the current security mechanisms are too complex or require too much effort for them.
In their paradigm, CaaS separates capabilities, requires less trust from cloud or CaaS provider, leverages
existing infrastructure, and performs automatic key management. They test CaaS on Facebook, Dropbox and some popular service providers.


\section{Background Of Cloud Storage}
\label{sec:back}

This section introduces the background information about cloud storage service. 
In this chapter, most of our experiments are conducted on Dropbox platform as a representative
service provider. Accordingly, we briefly describe the Dropbox architecture and functions for user applications in this section.

As a leading solution in personal cloud storage market, 
Dropbox provides
cross-platform service based on Amazon Simple Storage Service(S3) for both
desktop and mobile users.

\begin{figure}[ht]
\centering
\includegraphics[width=0.36\textwidth]{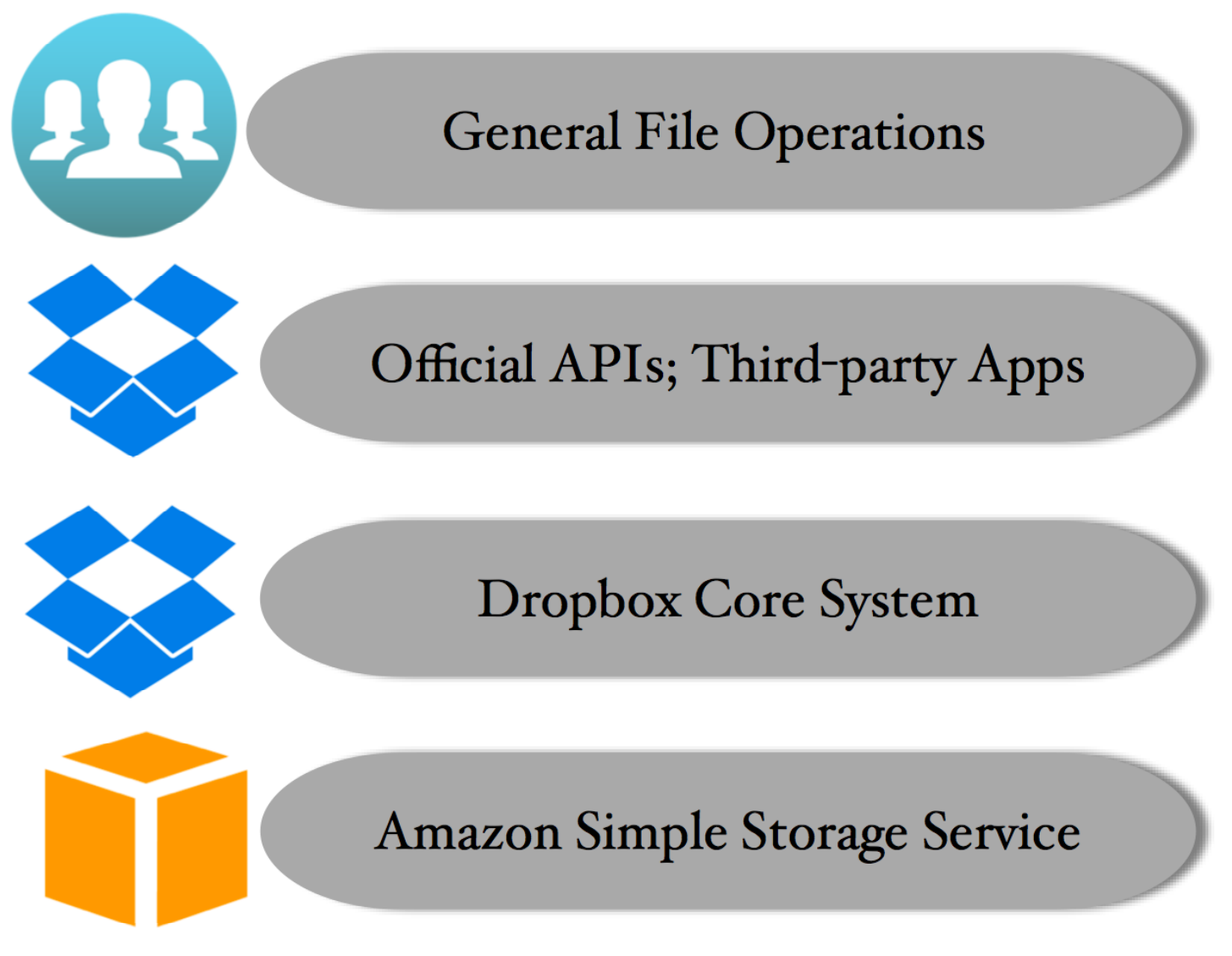}
\caption{Architecture of Dropbox services}
\label{fig:dropbox}
\end{figure}
%
The architecture of Dropbox follows a layered structure (Fig.~\ref{fig:dropbox}). At
bottom level, Amazon S3 infrastructure provides a basic
interface for storing and retrieving data. The above layer is Dropbox core system which interacts with S3 storage service and
serves higher level applications. The top layer is the official Dropbox application and a set of APIs for developers to build
third party applications. 
To manage third party applications, Dropbox assigns each of
them an unique {\em app key} and {\em app secret} which can be used to identify one particular application.
When a client launches an application, Dropbox server follows the OAuth
\cite{oauth} for authorization. 
\begin{figure}[ht]
\centering
\includegraphics[width=0.49\textwidth]{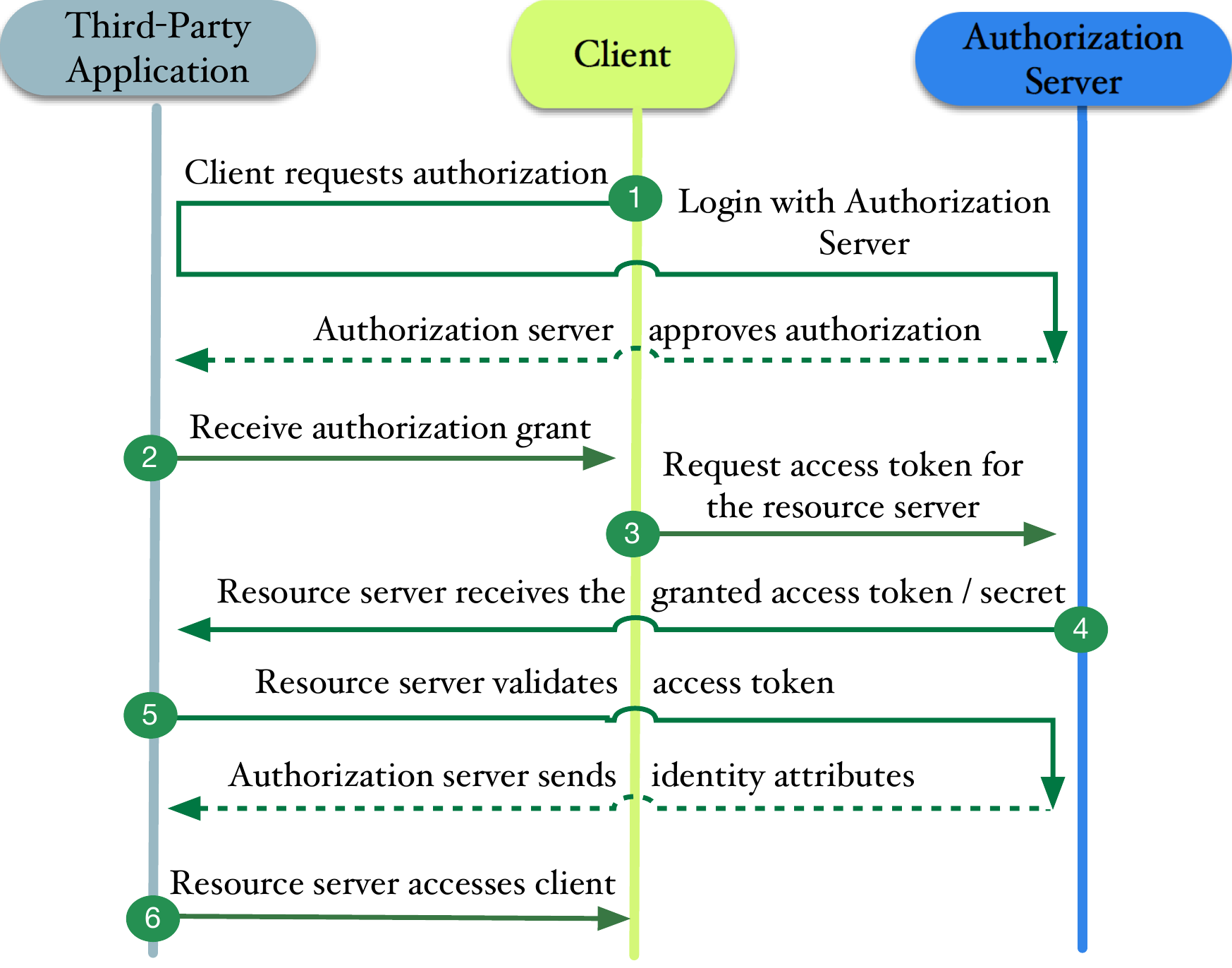}
\caption{OAuth workflow}
\label{fig:oauth}
\vspace{-0.1in}
\end{figure}

Fig.~\ref{fig:oauth} shows the
basic authorization flow. 
When launched by a client, the third party app 
contacts the authorization server (Dropbox server) to obtain an one time request token and request secret. 
Then, the app uses them to form a redirect link and present
the link to the client. 
When accessing this link, the client will be
prompted to login his Dropbox account and the Dropbox server will verify the
redirect link and the client's login information (Step 1 to 3).
After a successful login,
the server will return an {\em access token} and an {\em access secret} to the
application, 
which grants access permissions on the client's data
stored on Dropbox (Step 4).
Then, resource server sends the token directly to the authorization server to validate it (Step 5).
Finally, after the validation, third party applications can access the client's cloud storage (Step 6). 

\section{Architecture Of Skyfiles}
\label{sec:sol}

In this section, we present the basic architecture of Skyfiles. We first define the file operations that supported by Skyfiles, especially the new extended set of operations. Then we discuss the options of cloud instances we consider in Skyfiles. Finally we describe the major components in our design.

\subsection{Supported File Operations}
\label{sec:fileop}
In this subsection, we first present the file operations that Skyfiles supports. Our solution enables a user to launch an extended set of file operations, called {\em cloud-assisted operations}, on his files. In addition, we support file transfer between two users' cloud storage spaces.

\subsubsection{Single User File Operations}

Skyfiles supports two categories of file operations on a user's own files. The first category is
the {\em basic file operations} that are commonly available in service providers'
APIs such as creating/deleting/renaming a file. The second category is a new
set of advanced file operations that cannot be accomplished without a local copy
of the files. Skyfiles seeks the assistance from a cloud instance to conduct these operations. Thus we call the second category of operations {\em cloud-assisted file operations}.
In particular, we have considered the following four
cloud-assisted operations in Skyfiles:

\begin{itemize}

\item {\bf Download:} This operation allows a user to download files directly to his cloud storage.
Given the location of the target files such as URLs, the conventional way of downloading is to
first obtain the files on user devices and then synchronize with/upload to the cloud storage.
In Skyfiles, the cloud instance will fetch the files and then upload them to the user's cloud
storage, so that the downloading and uploading will not consume mobile device's bandwidth.

\item {\bf Compress:} This operation enables a user to compress existing files or directories
stored in cloud. If user devices hold local copies of the target files, the
operation can be easily accomplished and the generated compressed file can
be uploaded to the cloud storage. In Skyfiles or other similar apps for
mobile devices, however, the actual file contents are not available. Thus we
design an interface where the user can select the target files based on the
local shadow file system with meta data and then the compressing operation
is forwarded to a cloud instance for execution. The instance will fetch the
specified files from the cloud storage, compress them, and upload the
compressed file back to the cloud storage.

\item {\bf Encrypt:} This operation is similar to compression and does not exist in
current apps that do not keep local file copies. In Skyfiles, the user can
choose the target files and the cipher suite including the cryptographic
algorithm and key. The encryption operation will be sent to a cloud instance.
Similarly, the cloud instance will download the target files from the user's
cloud storage, encrypt them, and send the ciphertext back to the cloud storage.

\item {\bf Convert:} The last operation is particularly for media files such as
pictures and video clips. When a user wants to view a picture stored in cloud,
he has to download it to his smartphone. Nowadays, high-resolution picture
files could be very large, but a smartphone user may not benefit from it because
of the limited screen size. In Skyfiles, therefore, a user can specify an
acceptable resolution when view a picture. The request will be processed
by a cloud instance. The original picture will be downloaded to the instance
and then converted to a smaller file according to the user-specified resolution.
Finally, the converted picture is sent to the user.

\end{itemize}

Overall, we develop the above set of advanced file operations for mobile
devices which are impossible to achieve in conventional apps without local
file copies. In Skyfiles, a cloud instance is launched to assist a user to
accomplish the advanced file operations. With the design and framework of Skyfiles, it is 
easy to enrich the supported file operations.

\subsubsection{File Transfer Between Users}

Another important file operation in Skyfiles is the
file transfer between users' cloud storage spaces. While most cloud storage services allow a user
to share files with another user, copying files across different user spaces
is not supported. However, file sharing between users cannot substitute file
transfer (make a copy). With file sharing, a user's actions on the shared
files will affect other users. For example, if a user deletes the shared
files, all the other users lose those files as well. In this subsection, we
consider the file transfer between user spaces illustrated in
Fig.\ref{fig:demo}. Assume that two users carrying smartphones meet with each
other and both have storage spaces on cloud. One user (as the sender) wants
to transfer files in his cloud storage to the other user's (as the receiver)
cloud storage.

In the conventional solution, the sender can download the target files to
his smartphone and send them to the receiver's phone through Internet or
short range connection such as Bluetooth and NFC. Upon receiving the files,
the receiver's phone can upload or synchronize them with the cloud storage.
This solution, however, is not efficient in terms of bandwidth consumption,
especially when the target files are large, e.g., a bunch of pictures or
video clips, as the sender has to download the entire files and the receiver
has to upload all the files. In Skyfiles, we solve the problem by following the same design principal of
utilizing a cloud instance to assist users to transfer files between their
cloud storage spaces.


\subsection{Cloud Instances}

For cloud-assisted file operations, we consider two options of cloud instances to carry out the operations. The first option is a {\em private cloud instance}, which is created by the user. This requires the smartphone user to have an account with a cloud computing service provider such as Amazon AWS~\cite{aws} or Azure~\cite{azure}. The private cloud instance is completely trusted and can be customized with additional packages to support the cloud-assisted file operations. The smartphone user has the root privilege and full control on the private cloud instance.

The second option is to use a {\em shared cloud instance}, i.e., the cloud instance initiated by other users. This option could help address the following two issues of the private cloud instance.
First, launching a cloud instance on-demand for the cloud-assisted file operations incurs a significant
overhead, e.g., the overhead ranges from 15 seconds to 30 seconds in our experiments in
Section~\ref{sec:eval}. In reality, such a long delay is not
suitable for some file operations which require instant
response and may negatively affect users experience. The second
issue is the cost of launching cloud instances. Although cloud
service is inexpensive, frequently starting cloud instances may
still increase the cost of users. High cost could be caused by
users' misconfiguration and incorrect design of apps. Users can
certainly monitor the usage and charge on their cloud service
account to avoid unexpected costs. But it could greatly limit
the features of Skyfiles. In Skyfiles, users are allowed to share cloud instances with each other. It is motivated by the fact that cloud service providers charge the instance service at a certain time granularity. For example, AWS, Microsoft Azure~\cite{azure} , and HP Cloud~\cite{hpcloud}
charge the usage of cloud instance at the granularity of an
hour. For regular file operations, it is an excessive time
period. If a user starts a cloud instance for a file operation,
he does not have to terminate the instance when the operation
is done. The instance can be kept running to serve other users
until an additional cost is about to be charged.
For example, assume a service
provider charges the instance service in the time unit of an
hour, when a user starts an instance and finishes his file
operations in the first 5 minutes, the instance he started can
stay active for another 55 minutes without extra cost for him.
During this idle time period, if the instance can serve other
users or other file operations from the same user, the overhead
and cost will be both reduced. In addition, there could be long-term instances available from voluntary users and dedicated Skyfiles servers.

\subsection{Framework Of Skyfiles}

In Skyfiles, each mobile device is associated with a cloud storage account.
Similar to other related apps, Skyfiles by default does not keep local
copies of the files stored in cloud because of the storage limit,
bandwidth consumption and battery life. Instead, Skyfiles maintains a {\em shadow file
system} on the mobile device, which includes the meta information of the
files stored in cloud. This local file system is built on service provider's
APIs and synchronized with the cloud storage. 

In Skyfiles, we define a set of {\em file operation instructions} (FOIs) which is the building block of each cloud-assisted file operations. The following Table~\ref{table:foi} lists the FOIs we consider in Skyfiles.
\begin{table}
\centering
\begin{tabular}{|l|l|}
\hline
{\em download(file)}& Download a file to the local host\\
\hline
{\em get(file)}& Get a file from the cloud storage space\\
\hline
{\em put(file)}& Upload a file to the cloud storage space\\
\hline
{\em op(file)}& Conduct a file operation(op) on a local file\\
\hline
{\em push(file)}& Send a file to the smartphone\\
\hline
\end{tabular}
\caption{File Operation Instructions (FOIs)}
\label{table:foi}
\end{table}
Each cloud-assisted file operation defined earlier can be interpreted as a sequence of FOIs. 
Here lists the conversion of the four cloud-assisted file operations we consider in Skyfiles.
\begin{itemize}
\item Download: {\em download(f)} $\rightarrow$ {\em put(f)}

\item Compress: {\em get(f)} $\rightarrow$ {\em f1=op(f)} $\rightarrow$ {\em put(f1)}, where `op' is a compression operation.
    
\item Encrypt: {\em get(f)} $\rightarrow$ {\em f1=op(f)} $\rightarrow$ {\em put(f1)}, where `op' is an encryption operation.
    
\item Convert: {\em get(f)} $\rightarrow$ {\em f1=op(f)} $\rightarrow$ {\em push(f1)}, where `op' is a conversion operation.
\end{itemize}
Note that Skyfiles is a generally defined middleware, and the cloud-assisted file operations and the FOIs can be easily extended to support more file operations.

There are three major components in the architecture of Skyfiles: Skyfiles Agent, Skyfiles Service Program, and Skyfiles Server. 

\begin{itemize}

\item {\bf Skyfiles Agent} (\texttt{SA}) is software module running on a user's smartphone. It interacts with the user's requests, and configures the cloud storage and cloud instance if needed. 
First, it utilizes the standard APIs offered by
the cloud storage service provider, such as Dropbox, to obtain the permission of accessing the
clients' files on the cloud storage infrastructure. Then, it creates a shadow file
system which only includes the meta information on the smartphones. When a user operates
on the files, Skyfiles agent decides where to execute the operation, i.e., either locally on the phone or remotely on a cloud instance.
The decision is made based on the operation type,
the standard API support, and bandwidth/overhead cost. 
In this chapter, we simply consider the two categories of the file operations mentioned earlier in Section~\ref{sec:fileop}, {\em basic file operations} and {cloud-assisted file operations}.
All the basic file operations is processed on the phone by interacting with the cloud storage server via standard APIs. Skyfiles agent
will pass all the cloud-assisted operations to a cloud instance for execution. In practice, however,
the smartphone may also download a copy of the target file and complete the file operations defined in cloud-assisted file operations depending on the bandwidth/energy cost and the cost of launching a cloud instance, e.g., if the target file is small, it might be worthwhile to download it to the phone for the operation rather than launching a cloud instance to complete it. To simplify the problem,
our solution always forward the cloud-assisted operations to a cloud instance although Skyfiles agent
can be generally defined to decide the best strategy for conducting each file operation.

\item {\bf Skyfiles Service Program} (\texttt{SSP}) is a background service running on a participating cloud instance. It receives and processes the requests from a user's smartphone for cloud-assisted file operations. \texttt{SSP} includes the implementation of all the FOIs for the host environment. The communication between a user's smartphone and the \texttt{SSP} is protected by secure sockets such as TLS/SSL. During the execution of the file operation, \texttt{SSP} also sends heartbeat messages to the user's smartphone to report the progress or any exceptions it encounters. 

\item {\bf Skyfiles Server} (\texttt{Serv}) is a centralized trusted server dedicated to Skyiles application. It maintains two major functions. First, the Skyfiles server dispatchs the Skyfiles Service Program to each participating cloud instance, either a private cloud instance or a shared cloud instance. Second, the server keeps a pool of the available shared cloud instances and when receiving a user's request for a shared instance, it will allocate one for the user.

\end{itemize}

\begin{figure}[ht]
\centering
\includegraphics[width=0.48\textwidth]{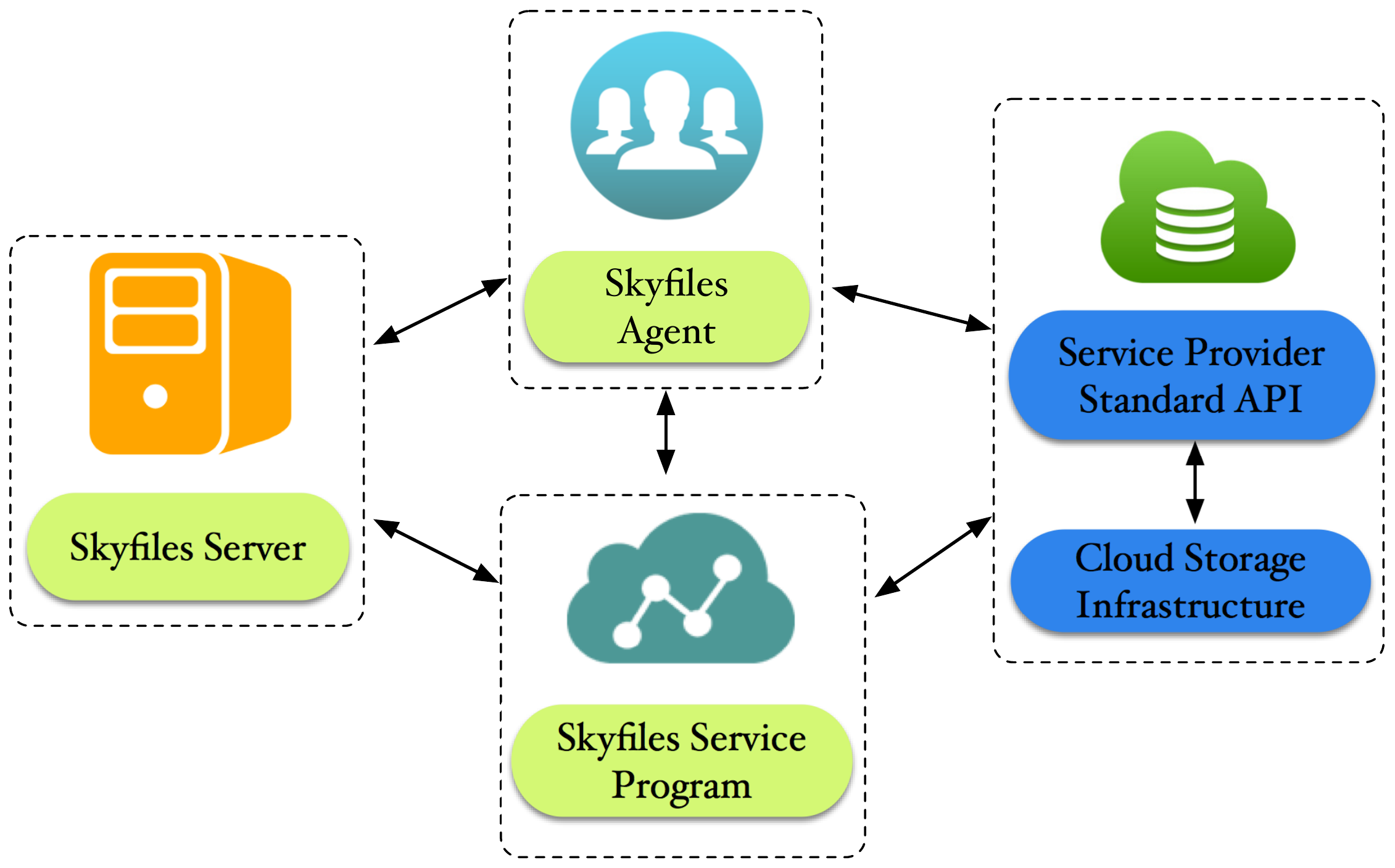}
\caption{Skyfiles system architecture}
\label{fig:skyfile-arch}
\end{figure}


%

\section{File Operations in Skyfiles}
\label{sec:sol2}
In this section, we introduce the detailed implementations of the file operations supported in Skyfiles. We focus on the new set of cloud-assisted file operations and separately present the solution with a private cloud instance and shared cloud instance.

\subsection{With a Private Cloud Instance}
Since a private cloud instance is initiated by the smartphone user, it is considered to be a trusted host for the fie operations.

\subsubsection{Single User File Operations}

Skyfiles processes a single user file operation with the following steps. First, the \texttt{SA} on the phone receives the file operation request, and identifies whether it is a basic file operation or cloud-assisted operation. In the latter case, \texttt{SA} will check if there is an active cloud instance serving the user. If not, \texttt{SA} will launch a private cloud instance using the user's cloud service account. In either case, when a private cloud instance is associated to serve the user, \texttt{SA} will further check if the \texttt{SSP} is running on the cloud instance. If the service is not available, \texttt{SA} will instruct the private cloud instance to contact the Skyfiles Server and fetch the \texttt{SSP}. Once \texttt{SSP} is running on the private cloud instance, \texttt{SA} will establish a secure channel between the smartphone and the cloud instance for the rest of the communication.

At this point, the user request will be analyzed and converted to a sequence of file operation instructions (FOIs). \texttt{SA} will forward the FOIs to \texttt{SSP} on the cloud instance for execution. Once the operation is completed, \texttt{SSP} will send an acknowledgment to \texttt{SA} which will further notify the smartphone user.

\subsubsection{File Transfer between Users}
\begin{figure}[ht]
        \centering
        \subfloat[Conventional Solution]{\includegraphics[width=0.24\textwidth]{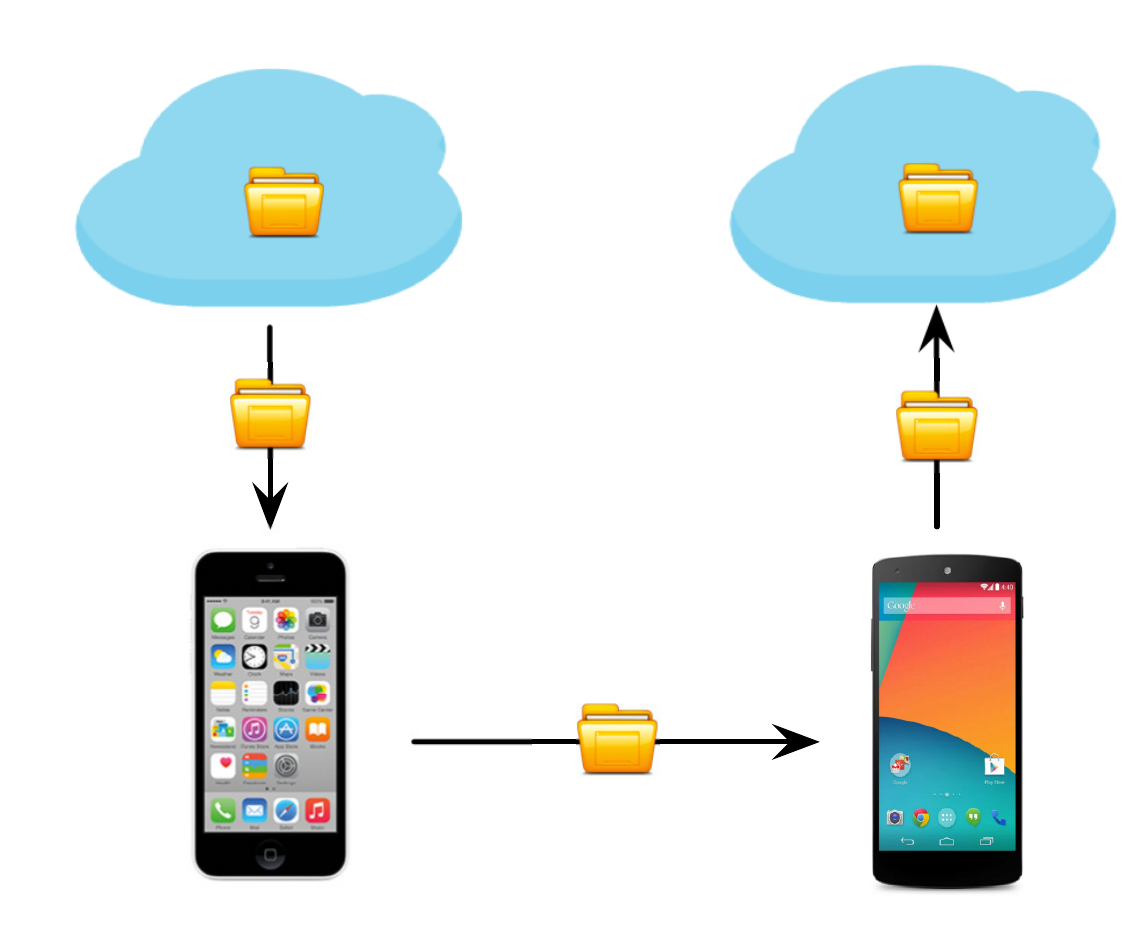}}
        \hfill
        \subfloat[Skyfiles]{\includegraphics[width=0.24\textwidth]{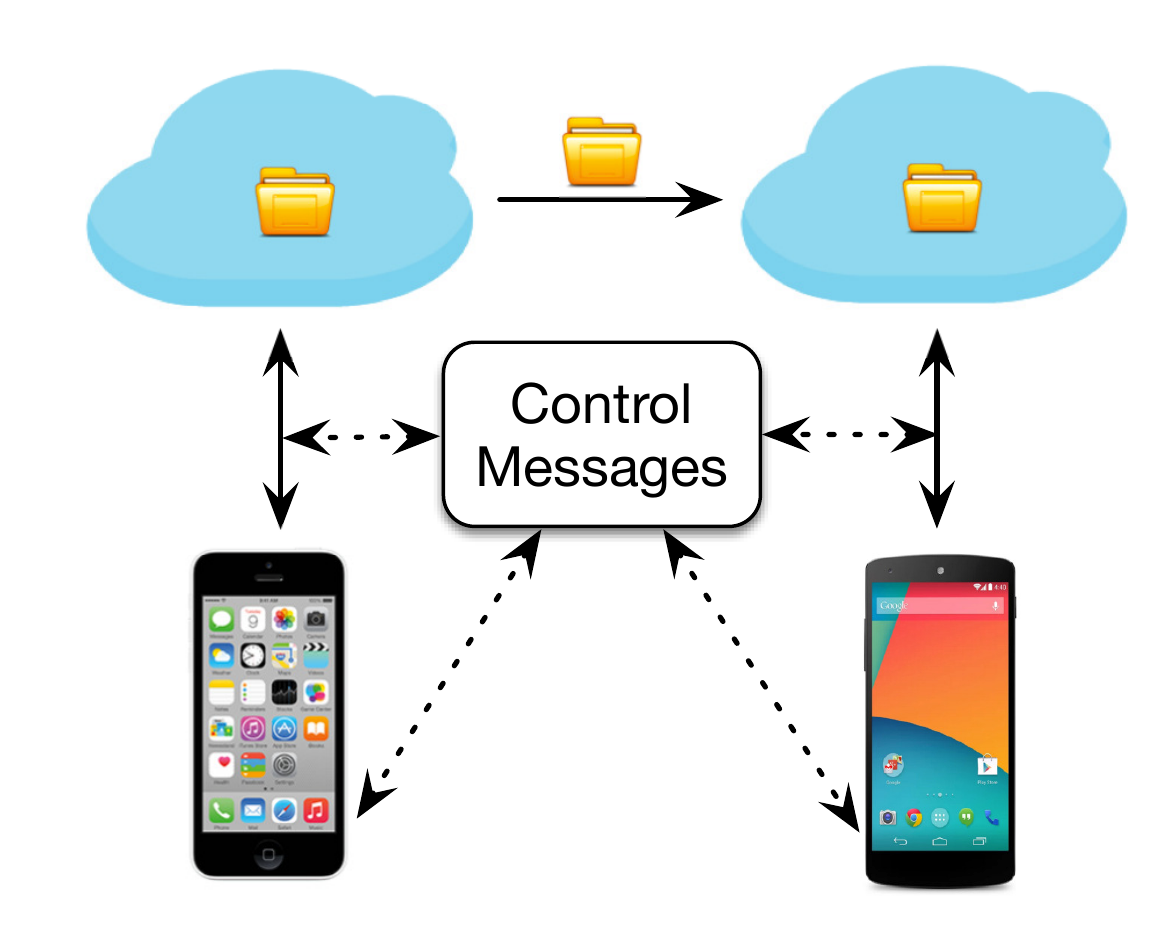}}
        \caption{File transfer between two users}
        \label{fig:demo}
\end{figure}
File transfer between users' cloud storage spaces is also considered as a cloud-assisted operation in Skyfiles.
Basically, a cloud instance is initialized and plays a role of relay node.
It fetches the target files from the sender's cloud storage space and then
forward them to the receiver's cloud storage. In this means, both sender and
receiver's smartphones do not have to hold a local copy of the target files. The bandwidth is consumed only by control messages between smartphones
and the cloud instance/cloud storage server. The basic design of using a
single cloud instance, however, is challenging in practice when no trust has
been established between the sender and receiver. The cloud instance can be
created by either the sender or receiver. In either case, it is not a secure
solution for the party who does not own the instance because the cloud
instance will need to obtain the security credentials of cloud storage from
both sender and receiver to complete the file transfer. Therefore, the owner
of the cloud instance will be able to access the cloud storage space of the
other user which could breach data privacy and lead to other malicious
operations.

To address the above issue, in Skyfiles, we develop a solution that requires
two cloud instances, one from the sender (user \texttt{U\textsubscript{A}}) and the other from the receiver (user \texttt{U\textsubscript{B}}). In the remaining of this paper, we refer to either party in the interactions as user. However, it should note the communication is done using the corresponding \texttt{SA}. Figure \ref{fig:flow-UA-UB} shows the flow of how to
accomplish file transfer between two users (Step 1 to 7). Assume user \texttt{U\textsubscript{A}} is trying to send a file $F$
(or a set of files) to user \texttt{U\textsubscript{B}}. Let $P_{src}$ be the
location of $F$ in the cloud storage of \texttt{U\textsubscript{A}} and $P_{dst}$ be the destination
location that \texttt{U\textsubscript{B}} will put in his cloud storage. In our protocol
description, \texttt{U\textsubscript{A}} and \texttt{U\textsubscript{B}} also represent the users' smartphones. 
First, \texttt{U\textsubscript{A}} starts a cloud instance \texttt{I\textsubscript{A}} and uploads the
security credentials for accessing his cloud storage space to the instance.
\texttt{U\textsubscript{A}}'s request also includes the source file location $P_{src}$ and an {\em
intermediate file location} $URI_F$ (unique resource identifier) which
indicates where $F$ is stored on the instance. The cloud instance will use
the security credentials to download the target file $F$ to its local disk.
At this point, \texttt{I\textsubscript{A}} needs to make $F$ accessible to user \texttt{U\textsubscript{B}}. It first
sends \texttt{U\textsubscript{A}} the intermediate file location $URI_F$. Then, \texttt{I\textsubscript{A}} can set $F$
publicly available or creates a guest account and set the permissions of $F$
so that only the guest account can access it. In the latter case, the
security information for login as the guest account, such as login
password or identity file, needs to be sent back to \texttt{U\textsubscript{A}} as well. Afterwards, the steps on \texttt{U\textsubscript{A}}'s side have been completed. Then, \texttt{U\textsubscript{A}} needs to
notify \texttt{U\textsubscript{B}} necessary information for accessing $F$. Since this step of
communication includes sensitive information, Skyfiles adopts NFC protocol
to securely deliver $URI_F$ and optional login information from \texttt{U\textsubscript{A}}'s
smartphone to \texttt{U\textsubscript{B}}'s phone. At receiver's side, \texttt{U\textsubscript{B}} also starts a cloud
instance \texttt{I\textsubscript{B}} which obtains $F$ from \texttt{I\textsubscript{A}} based on $URI_F$. Finally, \texttt{I\textsubscript{B}}
uploads $F$ to $P_{dst}$. Here, both sender and receiver start a cloud
instance to behave as their agents. The data transfer of $F$ is between
cloud instances and cloud storage servers, which does not consume bandwidth
of users' smartphones. Meanwhile, the security credentials for accessing
cloud storage are only sent to the instance created by the same owner. Thus,
in Skyfiles, file transfer between two users' cloud storage is efficient and
secure.

\begin{figure}[ht]
        \centering
        \includegraphics[width=.49\textwidth]{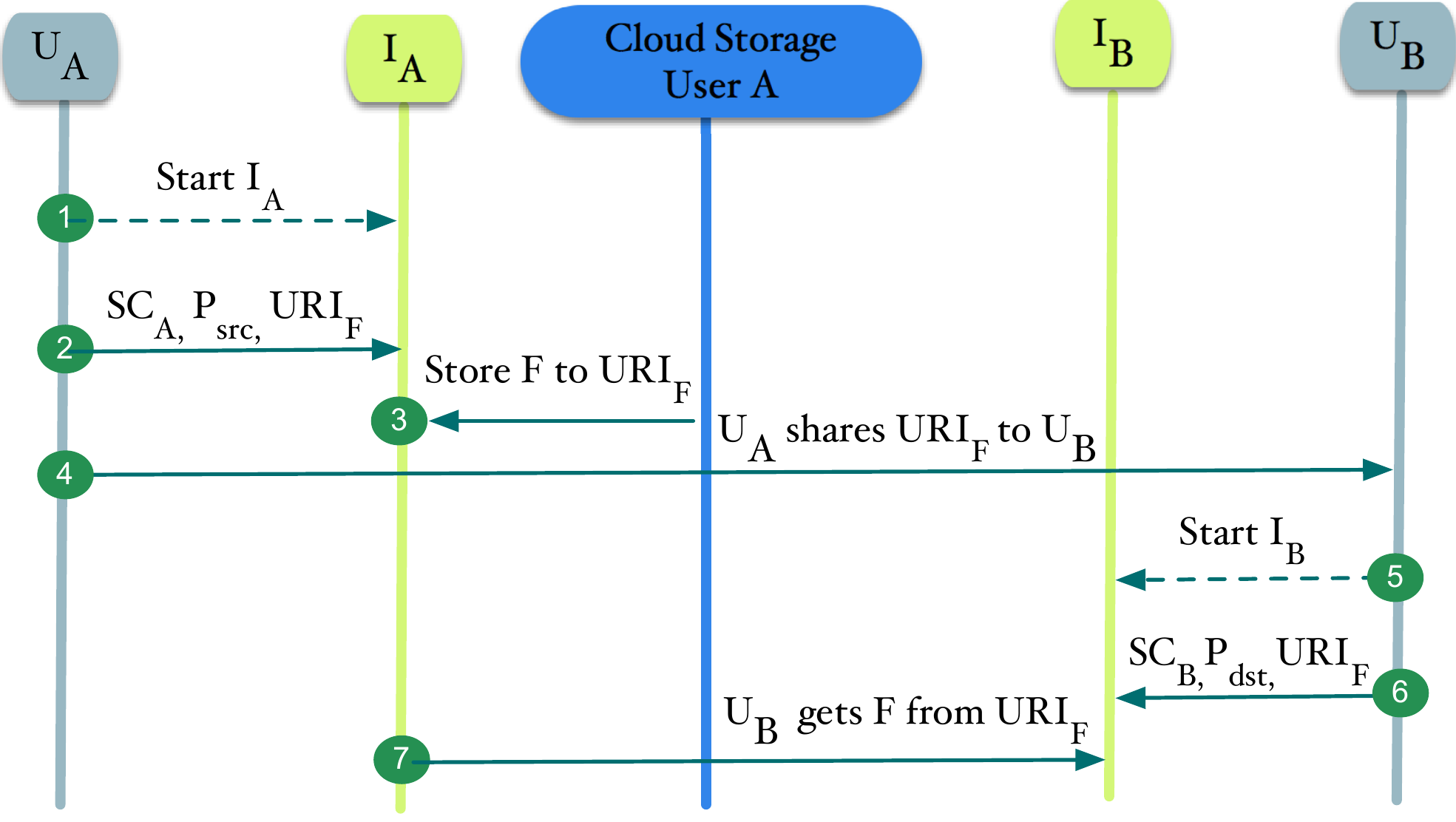}
        \caption{File transfer from \texttt{U\textsubscript{A}} to \texttt{U\textsubscript{B}}}
        \label{fig:flow-UA-UB}
\end{figure}












\subsection{With A Shared Cloud Instance}

In Skyfiles, the other alternative of launching a private cloud instance is to use a shared cloud instance which could save the initial overhead and cost.
While benefiting the performance, the design of sharing instances among users incurs challenge for security. First, it is risky for a
user to upload his security credentials of cloud storage
account to other users' cloud instances. The owner of the
instance may monitor and catch the security credential, and
gain the access to the user's cloud storage space. Second, when
open to public, the shared cloud instances may be used by
malicious users to launch attacks.

In Skyfiles, we have developed a framework for sharing
instances with the involvement of the trusted Skyfiles Server. This server
maintains a list of available cloud instances for sharing and
coordinates the users who request instances and share
instances. Skyfiles applies the following two basic policies to
address the security concerns. First, an instance is shared in
the form of launching \texttt{SSP} as a background service and accepting
requests from other users, rather than allowing other users to
login and execute arbitrary programs. Second, when using a
shared instance, a user does not upload the security
credentials of his cloud storage account in plaintext, but in
an encrypted format. In this way, the owner of the instance can
not gain the access to the tenant user's cloud storage space
and any user's privileges on shared instances are limited to
the specified file operations.

Specifically, there are three types of entities in our design,
the trusted Skyfiles Server \texttt{Serv}, a user \texttt{U\textsubscript{A}} who wants to conduct file
operations on a shared instance, and an available cloud
instance \texttt{I} owned by another user \texttt{U\textsubscript{B}}. The trusted server
holds a binary program \texttt{SSP} and can be running on a shared
instance to provide Skyfiles services to other users. Once a
user (\texttt{U\textsubscript{B}}) decides to share his instance (\texttt{I}), the instance
will contact the server and forwards the basic information
about \texttt{I} such as operating system, hardware setting, and the
time left for sharing. The response from the server is an
executable binary \texttt{SSP}, which has a unique identifier $PID$. After \texttt{SSP} is deployed properly the server will produce an initial key $k_{serv}=H(PID||t)$, where $t$ is the current timestamp rounded to minute and $H(x)$ is an one-way function, such as SHA256. The server then will send $k_{serv}$ and a random time offset $t_{offset}$ (ranging between 1 to 512 seconds) to the program \texttt{SSP} over a secure channel, such as TLS. \texttt{SSP} will send acknowledgment back to \texttt{Serv} and then the instance \texttt{I} is ready to serve requests. Both \texttt{Serv} and \texttt{SSP} will independently produce a new key $k'_{serv}=H(k_{serv}||t+t_{offset})$ after some time intervals (default to 3~min in our design). While $t$ is the current timestamp, $t_{offset}$ is used as a mask to prevent side channel attacks. During an interval, both \texttt{Serv} and \texttt{SSP} preserve $k_{serv}$ and $k'_{serv}$ to handle rare cases in which user data is encrypted under the old key while a new key is generated. This key generation processes ensure both \texttt{Serv} and \texttt{SSP} are able to independently produce a temporal-dependent key for the entire execution-time of \texttt{SSP}. The temporal-dependent key can be used for authentication when \texttt{SSP} want to communicate with \texttt{Serv}, because they can independently generate the same shared secret \cite{EIDgoogle}.

We assume the program \texttt{SSP} is protected by program obfuscation techniques and is validated by typical certificate techniques. After the key exchange processes, \texttt{I} will execute \texttt{SSP} as a service
and be ready to accept other users' requests. The server, on
the other hand, adds \texttt{I\textsubscript{B}} into the list of available instances
for sharing. Finally, each shared instance \texttt{I} can set a
scheduled task to automatically shut down the instance before
the additional charge is incurred. During the shutdown process,
\texttt{I} also notifies \texttt{Serv} which will consequently remove
\texttt{I} from the list of available instances for sharing.

\begin{figure}[ht]
        \centering
        \includegraphics[width=.49\textwidth]{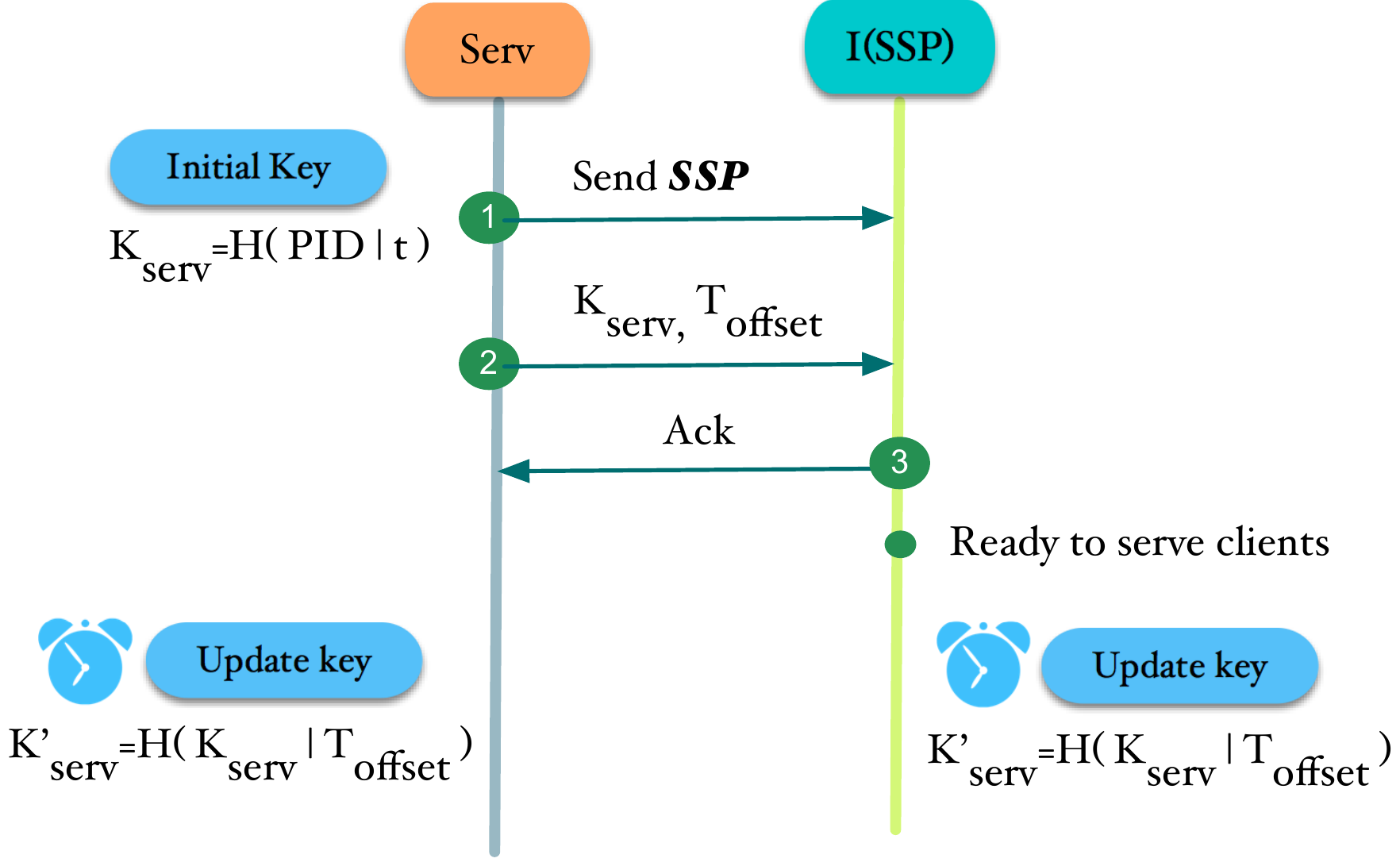}
        \caption{SSP program deployment and key update.}
        \label{fig:flow-SSP-Program}
\end{figure}




\subsubsection{Single User File Operations} 

When a user (\texttt{U\textsubscript{A}}) requests to
use a shared instance to conduct operations on his files on
cloud, he needs to first contact the trusted server \texttt{Serv}. The \texttt{Serv}
will generate a random element $R_A$ for the requesting user and produce a user specific key $k_A$ by a hash function using the current server key $k_{serv}$ and $R_A$, such that $k_A=H(k_{serv}||R_A)$.

The server then chooses a shared instance from the list to
serve \texttt{U\textsubscript{A}} and it could be an interactive process that
involves \texttt{U\textsubscript{A}}'s opinion. Assume \texttt{I} is selected, the server
sends $(R_A, k_A)$ and the IP address of \texttt{I} back to the user
\texttt{U\textsubscript{A}}. Next, \texttt{U\textsubscript{A}} will encrypt the security credentials of his
cloud storage using key $k_A$, $\zeta_A=enc(k_A, SC_A)$, and upload the ciphertext $\zeta_A$ and
$R_A$ to the shared instance \texttt{I}. Since the \texttt{SSP} can independently produce a copy of the $k_{serv}$ following the procedures described earlier, the \texttt{SSP} can generate the user key $k_A$ and decrypt the security credentials of \texttt{U\textsubscript{A}}; such that $SC_A=dec(k_A, \zeta_A)$. In this way, the security credentials are securely transferred to the cloud instance.

\begin{figure}[ht]
        \centering
        \includegraphics[width=.49\textwidth]{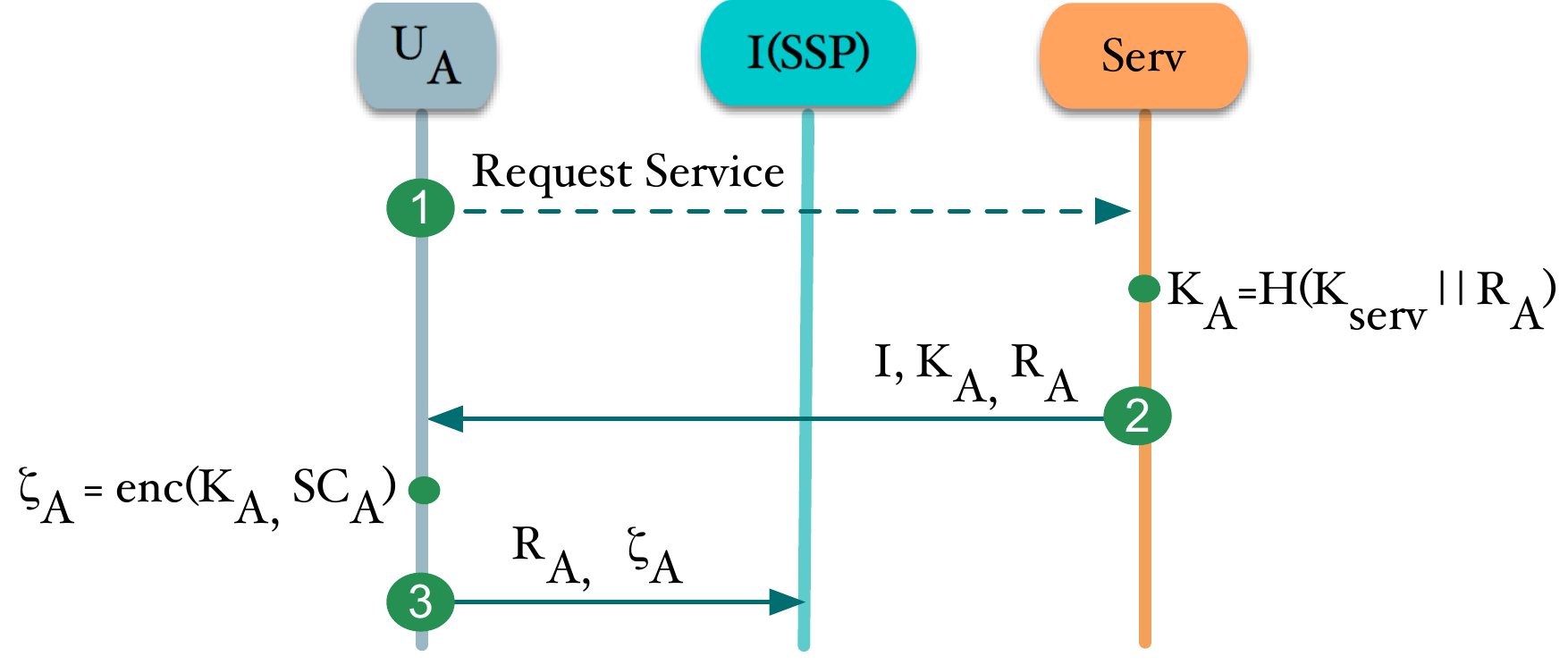}
        \caption{Serve request from \texttt{U\textsubscript{A}}.}
        \label{fig:flow-serve-UA-request}
\end{figure}



It is possible that multiple users share the same instance in which case the server \texttt{Serv} will generate user specific random number $R_i$ and key $k_i$.

\subsubsection{File Transfer Between Users} 
In Skyfiles, two users can
also request a shared instance for transferring files between
their cloud storage. Following the design in
Section~\ref{sec:sol}, the sender will initialize the process
and request a shared instance from the server. Compared to the
single user operations, file transfer requires both sender
(\texttt{U\textsubscript{A}}) and receiver (\texttt{U\textsubscript{B}}) to send the security credentials
of their cloud storage account to the shared instance. In
addition, the sender needs to notify the receiver the instance
assigned by \texttt{Serv}. Figure \ref{fig:flow-file-transfer-UA-UB} shows the major
messages exchanged in our design. 

\begin{figure}[ht]
        \centering
        \includegraphics[width=.49\textwidth]{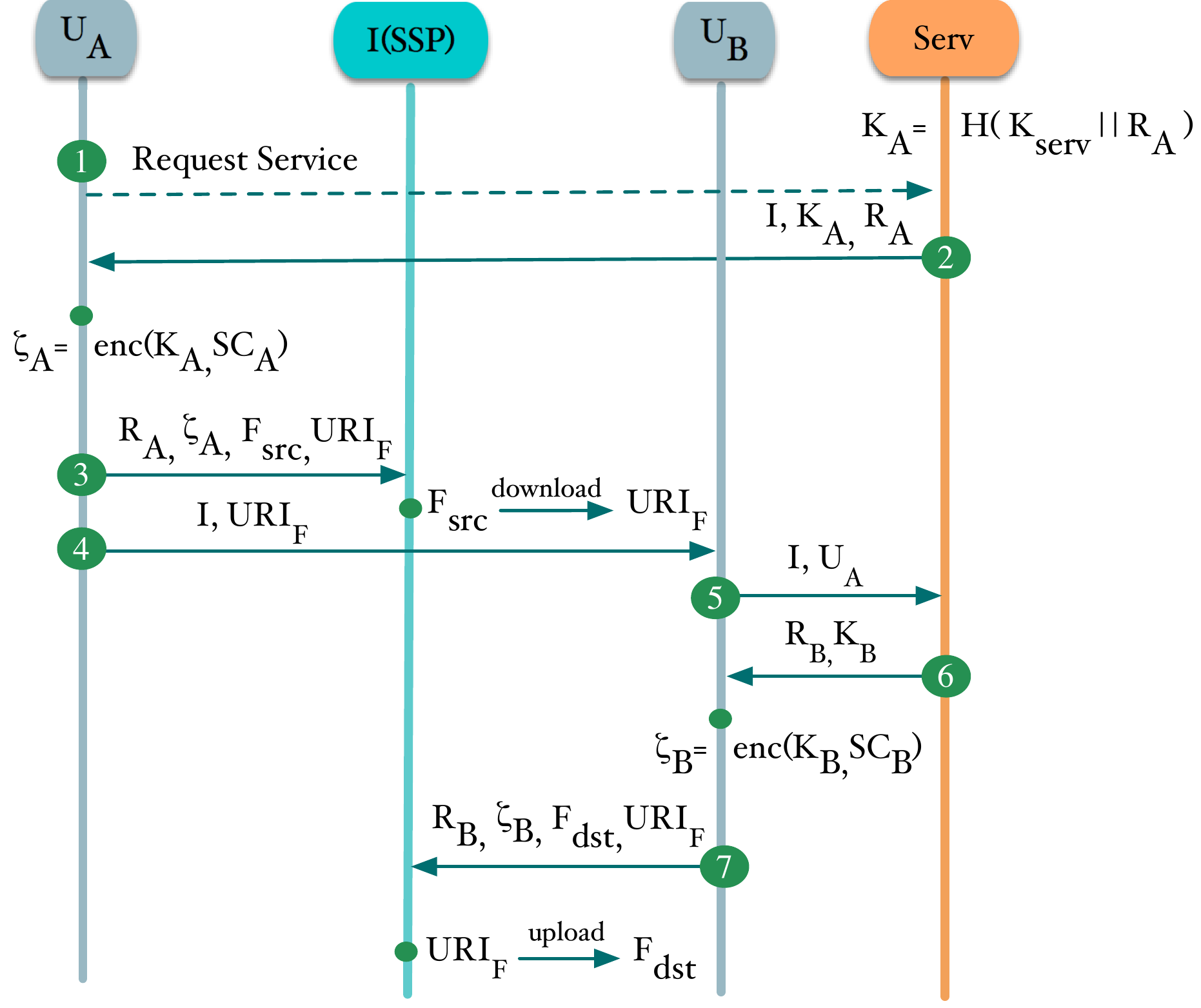}
        \caption{File transfer between \texttt{U\textsubscript{A}} and \texttt{U\textsubscript{B}}.}
        \label{fig:flow-file-transfer-UA-UB}
\end{figure}

%
%
%
%
%
%
%
%
%
%
%
%














Sender \texttt{U\textsubscript{A}} initializes the request by contacting the server
\texttt{Serv}. 
\texttt{U\textsubscript{A}} will upload the encrypted security credentials $\zeta_A$ to \texttt{I} as well as the
source file location ($F_{src}$) and intermediate file location ($URI_F$). Then \texttt{U\textsubscript{A}} will notify the
receiver \texttt{U\textsubscript{B}} the shared instance \texttt{I} and the location of the
target files ($URI_F$). This message is attached with the
certificate from \texttt{Serv} so that the receiver can verify the shared
instance \texttt{I} is legitimate. Next, the receiver \texttt{U\textsubscript{B}} sends
the server a request with (\texttt{U\textsubscript{A}}, \texttt{I}). After verifying
there exists a shared instance \texttt{I} serving \texttt{U\textsubscript{A}}, the server
will send back $R_B$ and $k_B$ to \texttt{U\textsubscript{B}} so that \texttt{U\textsubscript{B}} can encrypt his security credentials in the same way as \texttt{U\textsubscript{A}}. Eventually, \texttt{U\textsubscript{B}}
uploads the encrypted security credentials $\zeta_B$ and the intermediate file location ($URI_F$) and destination file location ($F_{dst}$) to the shared instance \texttt{I}.


\subsection{Shared Cloud Instance Management}
In the above subsection, we present the design to securely utilize the shared cloud instance by using the Skyfiles Service Program. In our design, the Skyfiles Server manages a pool of shared instances. Since the cloud instances are provided by various users, the features of the instances could be different in terms of starting time, ending time and bandwidth, etc. In addition, the types users' requests could be different as well. For example, \texttt{U\textsubscript{A}} wants to "zip" a picture while \texttt{U\textsubscript{B}} and \texttt{U\textsubscript{C}} would like to transfer a large file between their cloud storages. In this case, \texttt{U\textsubscript{A}}'s request need less time and bandwidth than the file transfer request. In the subsection, we discuss on the management problem of the shared instance pool. 

Formally, the 
problem is defined as follows:
Given a set of cloud instances \texttt{C} and for each cloud instance $c(\tau_s, \tau_e, \beta) \in$ \texttt{C}, $\tau_s$ and $\tau_e$ indicates its service starting time and ending time, $\beta$ stands for its total free bandwidth. Suppose there are a set of tasks $K$, and for each request $k(\alpha, \gamma, \delta ) \in K $, $\alpha$ and $\gamma$ symbolize its requested starting time and ending time, and $\delta$ means its requested bandwidth. The problem is to allocate all the tasks by using minimum number of cloud instances without exceeding cloud instance's bandwidth limit, such that for each accommodated request, its requested starting and ending time is within the feasible working time of its allocated cloud instance and its requested bandwidth is satisfied. 
In this problem, we assume that one task cannot be split into sub-tasks and executed on different cloud instances. 

\begin{theorem}
The problem is NP-hard.
\end{theorem}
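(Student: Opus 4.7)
The plan is to prove NP-hardness by polynomial-time reduction from the classical Bin Packing problem, which is well-known to be NP-hard. The intuition is that the bandwidth constraint $\beta$ on each instance plays exactly the role of bin capacity, while the number of instances used plays the role of the number of bins, so by collapsing the temporal dimension we recover Bin Packing as a special case of our allocation problem.

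More concretely, I would start from an arbitrary Bin Packing instance consisting of items with sizes $a_1, a_2, \ldots, a_n$ and a bin capacity $B$, and construct the following instance of our shared-instance allocation problem in polynomial time. For each item $a_i$ I introduce one task $k_i$ with requested window $\alpha_i=0$, $\gamma_i=1$, and bandwidth demand $\delta_i = a_i$. I make available $n$ identical cloud instances (clearly sufficient, since in the worst case each task can take its own instance), each with service window $\tau_s=0$, $\tau_e=1$, and free bandwidth $\beta = B$. Because all tasks share the same time window $[0,1]$ and all instances cover this window, the temporal feasibility constraint is trivially satisfied, and the only binding constraint is that the sum of $\delta_i$ assigned to any one instance does not exceed $\beta=B$. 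Thus feasible assignments of the constructed instance are in one-to-one correspondence with feasible packings of the original Bin Packing instance, and the minimum number of instances equals the minimum number of bins.

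From this correspondence, a polynomial-time algorithm for deciding whether our allocation problem admits a solution using at most $m$ instances would immediately decide whether the Bin Packing instance has a packing into at most $m$ bins. Since Bin Packing is NP-hard and the reduction is clearly computable in time linear in the size of the input, NP-hardness of our problem follows. I would also briefly note that the reduction preserves the assumption stated just before the theorem that no task may be split across instances, which is precisely the non-divisibility assumption in Bin Packing.

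The main obstacle is not in the reduction itself, which is essentially immediate, but rather in arguing cleanly that the temporal dimension does not make the reduction ill-formed: one must point out explicitly that constant time windows are a legitimate special case of the problem as stated (i.e., nothing in the problem definition forbids all $(\tau_s,\tau_e)$ and $(\alpha,\gamma)$ being equal), and that the number of available instances in the reduction can be bounded by $n$ without loss of generality so the constructed instance has polynomial size. Once these sanity checks are in place, the reduction goes through and establishes the claim.
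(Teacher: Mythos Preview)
Your reduction from Bin Packing is correct and is essentially the same approach the paper takes: the paper also collapses the temporal dimension by assuming all tasks share one window $[\alpha,\gamma]$ contained in a common instance window $[\tau_s,\tau_e]$, observing that the residual problem is exactly Bin Packing. Your version is somewhat more carefully spelled out (explicit values, bounding the number of instances by $n$), but the idea is identical.
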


\begin{proof}
Let's first introduce the Bin-Packing problem: Given $n$ items with sizes $e_1$, $e_2$, \ldots , $e_n$, and a set of $m$ bins with capacity $c_1$, $c_2$, \ldots ,$c_m$, the Bin-Packing problem is to pack all the items into minimized number of bins without violating the bin capacity size. 
Assume that all the tasks have the same requested starting time $\alpha$ and ending time $\gamma$ , and all the cloud instances also have the same starting time $\tau_s$ and ending service time $\tau_e$, such that $\alpha \geq \tau_s$ and $\gamma \leq \tau_e$. Consequently, this problem is equivalent to the bin-packing problem, which is NP-hard \cite{Garey79}.   
\end{proof}

Next, we present an Integer Linear Program (ILP) to solve this problem.

\textbf{Notations:}
\begin{itemize}
\item $k(\alpha, \gamma, \delta)$: a task specifies the starting time $\alpha$, ending time $\gamma$ and requested bandwidth $\delta$. $\gamma-\alpha$ can be interpreted as the total task executing time.
\item $K$: a set $|k|$ tasks.
\item $c(\tau_s, \tau_e, \beta)$: a cloud instance that provides service from time $\tau_s$ to $\tau_e$ with total free bandwidth of $\beta$.

\item \texttt{C}: A set of $|$\texttt{C}$|$ shared instances.  

\end{itemize}

\textbf{Variable:}
\begin{itemize}
\item $Y_k^c$: boolean variable indicating whether task $c$ has been assigned by instance $c$. 
\end{itemize}

\textbf{Objective:}
\begin{equation} \label{Eq:Obj}
\min  \sum_{c \in \texttt{C}}  \max_{k \in K} Y_k^c
\end{equation}

\textbf{Bandwidth constraint:}

\begin{flalign} \label{Eq:AC1}
 \sum_{k(\alpha, \gamma, \delta) \in K:  \alpha \geq \tau_s \&\& \gamma \leq \tau_e} \delta \cdot Y^c_k  \leq \beta  ~~\forall c(\tau_s,\tau_e,\beta) \in \texttt{C}
\end{flalign}

\textbf{Assignment constraint:}

\begin{flalign} \label{Eq:AC2}
 \sum_{c \in \texttt{C}} Y_k^c = 1  ~~\forall k(\alpha, \gamma, \delta) \in K
\end{flalign}

Eq.~(\ref{Eq:Obj}) is the objective function which tries to minimize the number of used cloud instances. Eq.~(\ref{Eq:AC1}) ensures that all the accommodated tasks on a certain server should not exceed its freed bandwidth such that for each task its starting time and ending time should be within the cloud instance's server time. Eq.~(\ref{Eq:AC2}) ensures that each requested task must be assigned with one cloud instance.


\section{Performance Evaluation}
\label{sec:eval}

We have implemented Skyfiles system on Android with Dropbox~\cite{dropbox}
storage service and tested it on Google Nexus smartphone. 
For cloud-assisted operations, we use the service provided by Amazon
Web Service (AWS EC2)~\cite{aws}. All the experiments are conducted on
the Micro instance (613 MB memory, up to 2 EC2 Compute Units and up to 15GB bandwidth). 
Fig~\ref{implementation} shows the user interface of Skyfiles.
Basically, Skyfiles fetches meta data of user's Dropbox files to construct the shadow file system 
and when user clicks one file, the available operations highlight out. Users can login with their AWS 
accounts and specify whether they would like to share the cloud instances with others.

The major
performance metrics we consider are time overhead
and bandwidth consumption. 
In order to measure the bandwidth consumption, we connect the smartphone to
a WiFi access point (AP) for the Internet access. For a particular file
operation, we start a tcpdump~\cite{tcpdump} session on the AP to record all
data packets sent from and received by the smartphone. Then, we use a
typical analyzer to retrieve the transferred data size from the trace
records. To make the measurement more accurate, we additionally set the
firewall on the smartphone to block network traffic from all the
applications except Skyfiles. In particular, we use Linksys WRT54GL wireless
router running DD-WRT~\cite{ddwrt} in our tests. Finally, the performance
data reported in the rest of this section is the average value of the five
independent experiments with the same configuration.

\begin{figure}[ht]
\centering
\includegraphics[height=0.45\textwidth]{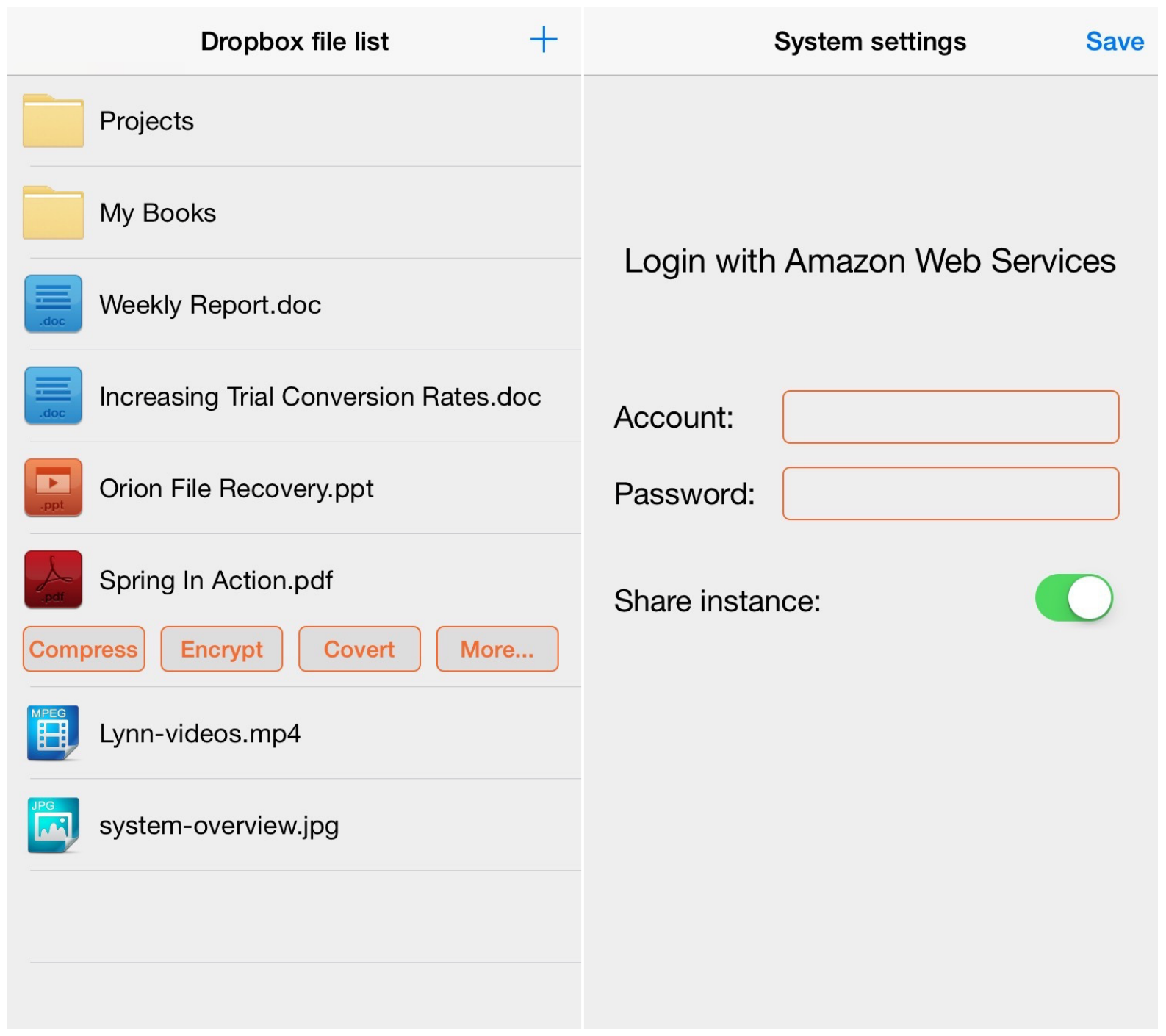}
\caption{Skyfiles screenshots}\label{implementation}
\end{figure}

Note that for the advanced file operations and file transfer between users,
there are rare solutions to compare with because Skyfiles represents
a new system paradigm for cloud storage management. In the rest of this
section, we will solely compare it to the simple solution that requires the
smartphone to download a local copy for those functions.



\subsection{Basic File Operations}
We first present the bandwidth consumption of basic file operations
implemented by Dropbox APIs. In this test, we create a new Dropbox account
with a folder containing 1000 text files (22 bytes each). The operations we
will test are (1) log in Dropbox; create (2) / delete (3) a folder (under the root
directory); create (4) / delete (5) / rename (6) a file; (7) enter/leave a folder.

\begin{figure*}[htbp]
\centering
\begin{minipage}[t]{0.32\linewidth}
\centering
\includegraphics[width=\linewidth]{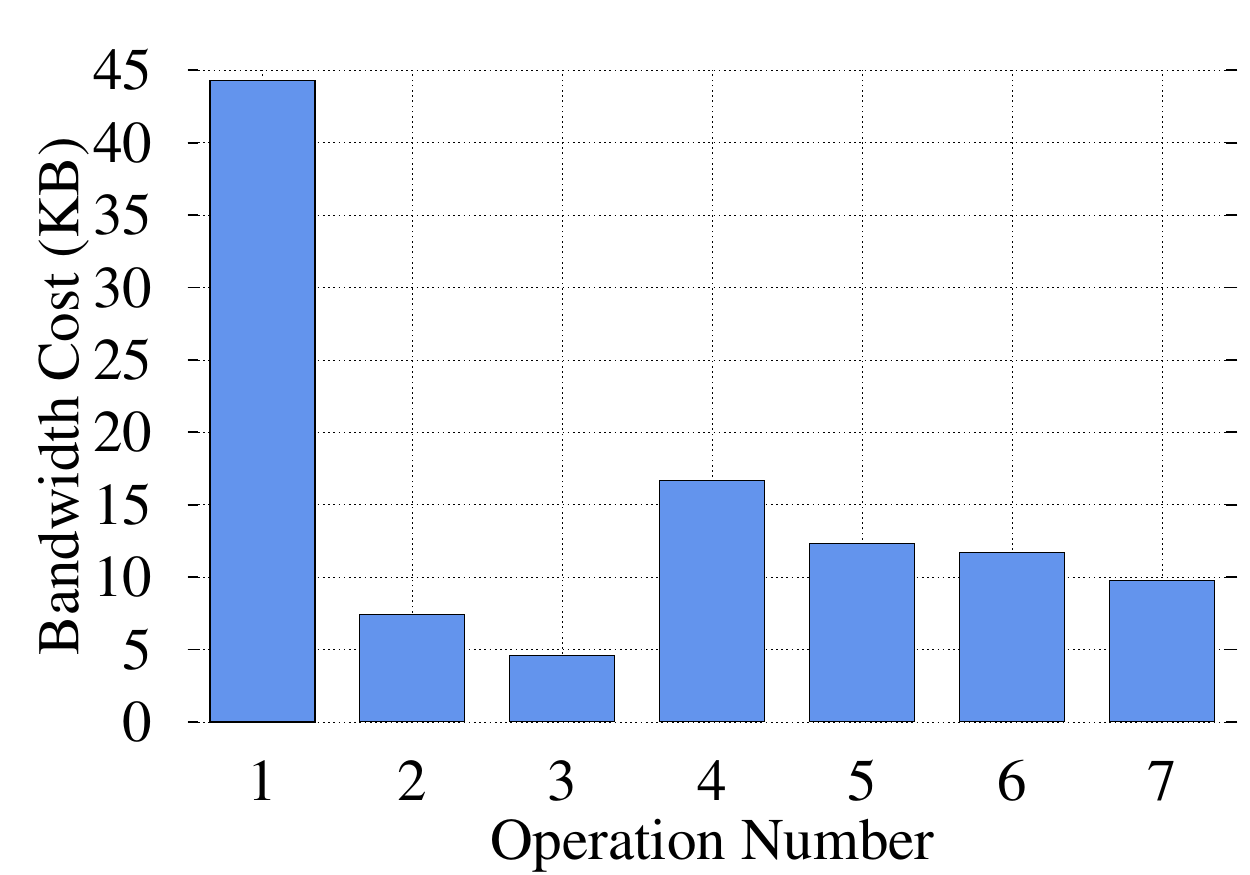}
\caption{Bandwidth cost of basic operations }
\label{basic}
\end{minipage}
~
\begin{minipage}[t]{0.32\linewidth}
    \centering
\includegraphics[width=\linewidth]{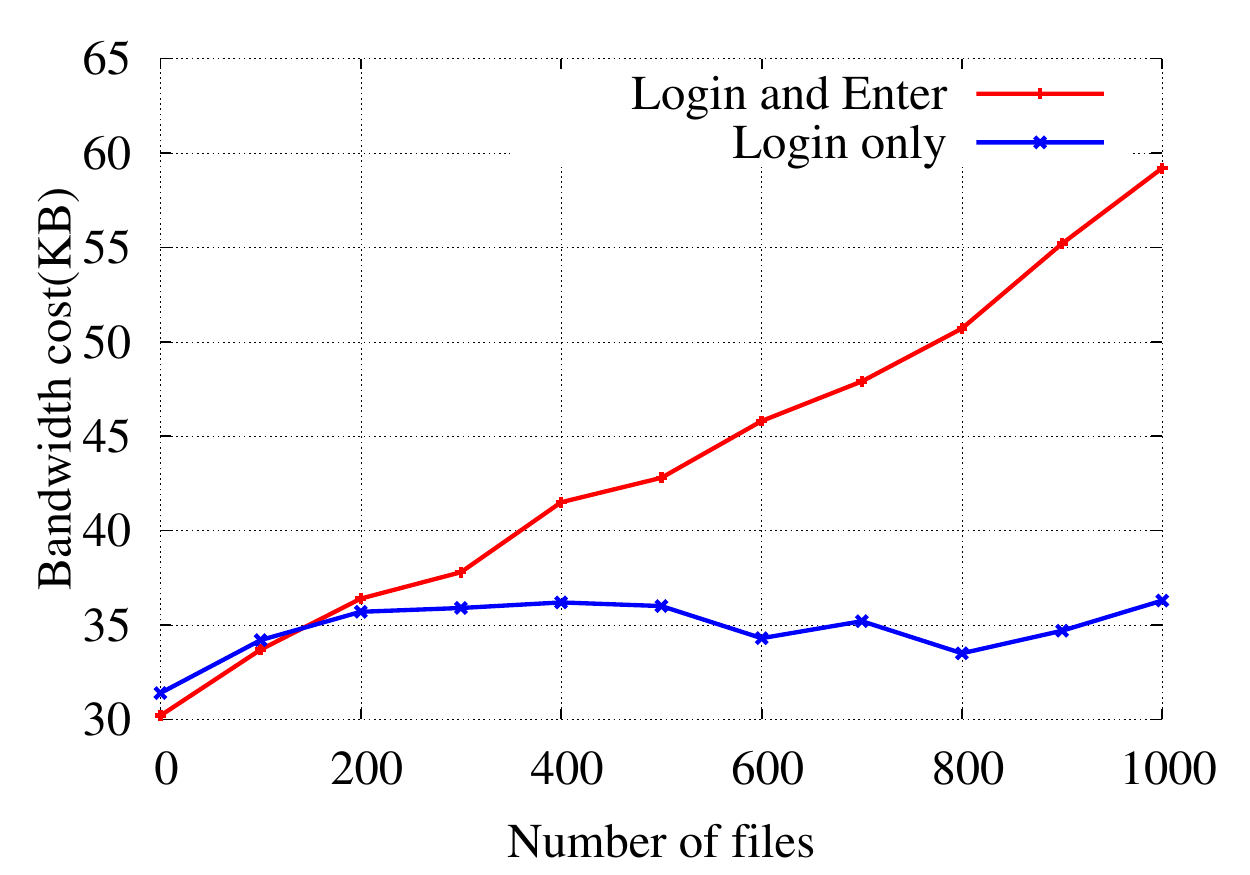}
\caption{Bandwidth cost v.s. number of files}\label{compare}
\end{minipage}
~
\begin{minipage}[t]{0.32\linewidth}
    \centering
\includegraphics[width=\linewidth]{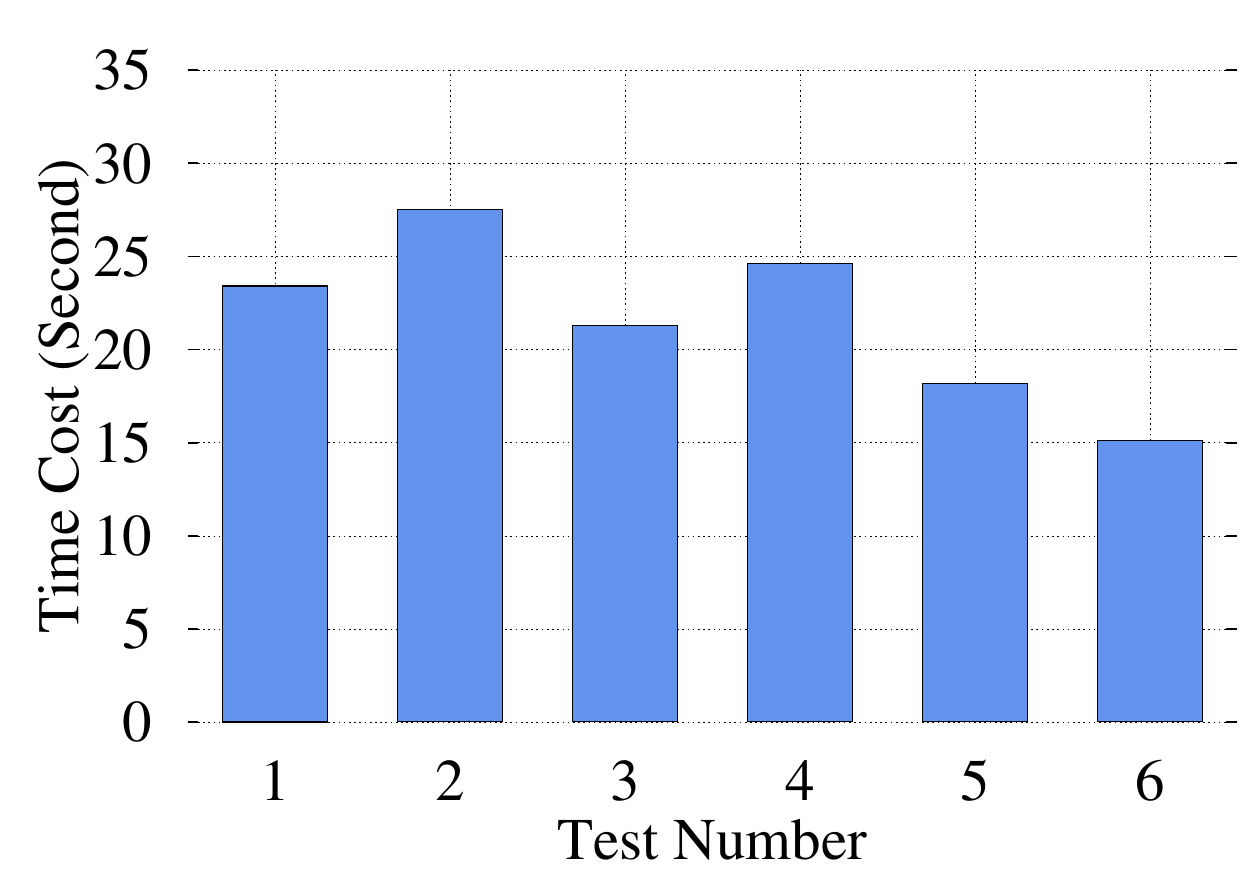}
\caption{Time cost of starting an instance}\label{start-instance}
\end{minipage}
\vspace{-0.1in}
\end{figure*}

For each file operation, Dropbox server requires security credential to be
attached and the communication to be based on SSL. As shown in Fig.\ref{basic},
login process consumes the most bandwidth because the interaction of
authentication and Dropbox APIs needs to recursively fetch meta data to
synchronize/update the local shadow file system. Creating and deleting an
empty folder consumes 7.3KBand 3.9KB respectively, which are the minimum
costs among the tested operations. Creating and deleting a text file is
similar to the previous case. When tested in the folder 'test',  it
certainly incurs more bandwidth cost (15.5KB and 11.2KB). It can be explained by the fact that
the folder contains 1000 other files and that once a change is made in the
folder, Dropbox APIs will re-fetch the list of files in it. Finally, when a
user enters the folder and then leaves, it costs 9.7KB bandwidth, which is
slightly lower than creating/deleting a file. The primary factor is
still the fetch of the entire file list. Fig~\ref{compare} illustrate the detailed bandwidth cost
of login-only and login-with-enter ('test' folder) operations.

\subsection{Cloud-assisted Advanced File Operations}
In this subsection, we evaluate the cloud-assisted file operations in
Skyfiles, particularly downloading and compressing. Due to the page limit, we omit
the results for encrypting and converting operations. The performance of encrypting
operation is similar to that of compressing operation. Skyfiles supports these
operations with user-created cloud instances and shared instances. The
difference among the performances is that using shared instances saves the
initial cost (mainly the time overhead) of starting a cloud instance.
Therefore, we first present the overhead of starting a cloud instance,
then show the performance of these operations assuming that a cloud instance
has been available. The workload we use for testing includes 4 sets of
files. In the rest of ~\ref{sec:eval}, we use file numbers to indicate the tested files: 
\begin{enumerate}
 \item One picture with high resolution option (16MB).
 \item Five pictures taken by Canon Powershot G11 and stored in a .tar file (83MB).
 \item 40 seconds video (MPEG 4) recorded by Samsung Nexus 4 (63MB).
 \item 82 seconds video (MPEG 4) recorded by Samsung Nexus 4 (127MB).
\end{enumerate}

{\bf Overhead of starting a cloud instance:}
We conduct six groups of tests in this experiment at different time of a
day. Each group contains 5 individual operation of starting a AWS Micro
instance. The operation ends when the user is able to log in the instance.
The following Fig.~\ref{start-instance} illustrates the results of the
average value. Overall it is a time-consuming process as
the all tested cases spend more than 14 seconds in starting an instance. 

\begin{table}[ht]
\centering
\begin{tabular}{|l|c|c|c|c|c|c|}
\hline
&Max&Min&&Max&Min\\
\hline
Test 1&19.8&26.7&Test 2&24.4&32.3\\
\hline
Test 3&17.2&325.0&Test 4&20.3&26.5\\
\hline
Test 5&14.3&22.1&Test 6&14.2&15.7\\
\hline
\end{tabular}
\caption{Max/Min overhead of starting an AWS cloud instance (second)}
\label{tab:start}
\end{table}

Amazon web service spends most of the time in allocating necessary
resources, like CPU, memory and storage, for the instance and 
installing software, such as operating system and ssh service. 
Thus, the time cost depends on the amazon web service system workload.
Table~\ref{tab:start} shows the variances of starting an instance in different time slot of a day. Test 6 that we conduct
at 3am-4am achieves smallest variance. According to the AWS usage history report~\cite{aws-price}, the system workload is
usually lower at the 3am-4am.

\begin{figure*}[htbp]
\centering
\begin{minipage}[t]{0.32\linewidth}
\centering
\includegraphics[width=\linewidth]{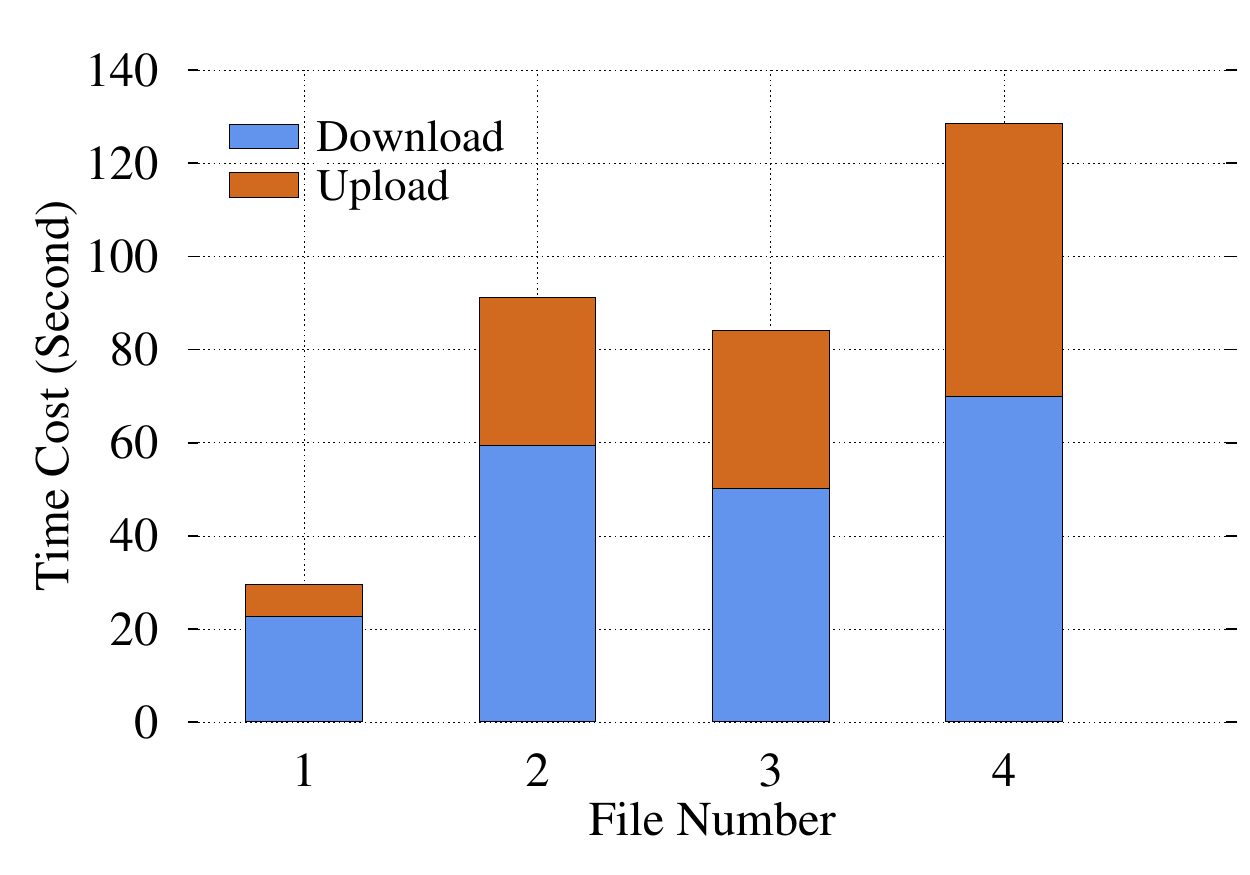}
\caption{Time cost of downloading Files}\label{download_wes}
\end{minipage}
\centering
\begin{minipage}[t]{0.32\linewidth}
    \centering
\includegraphics[width=\linewidth]{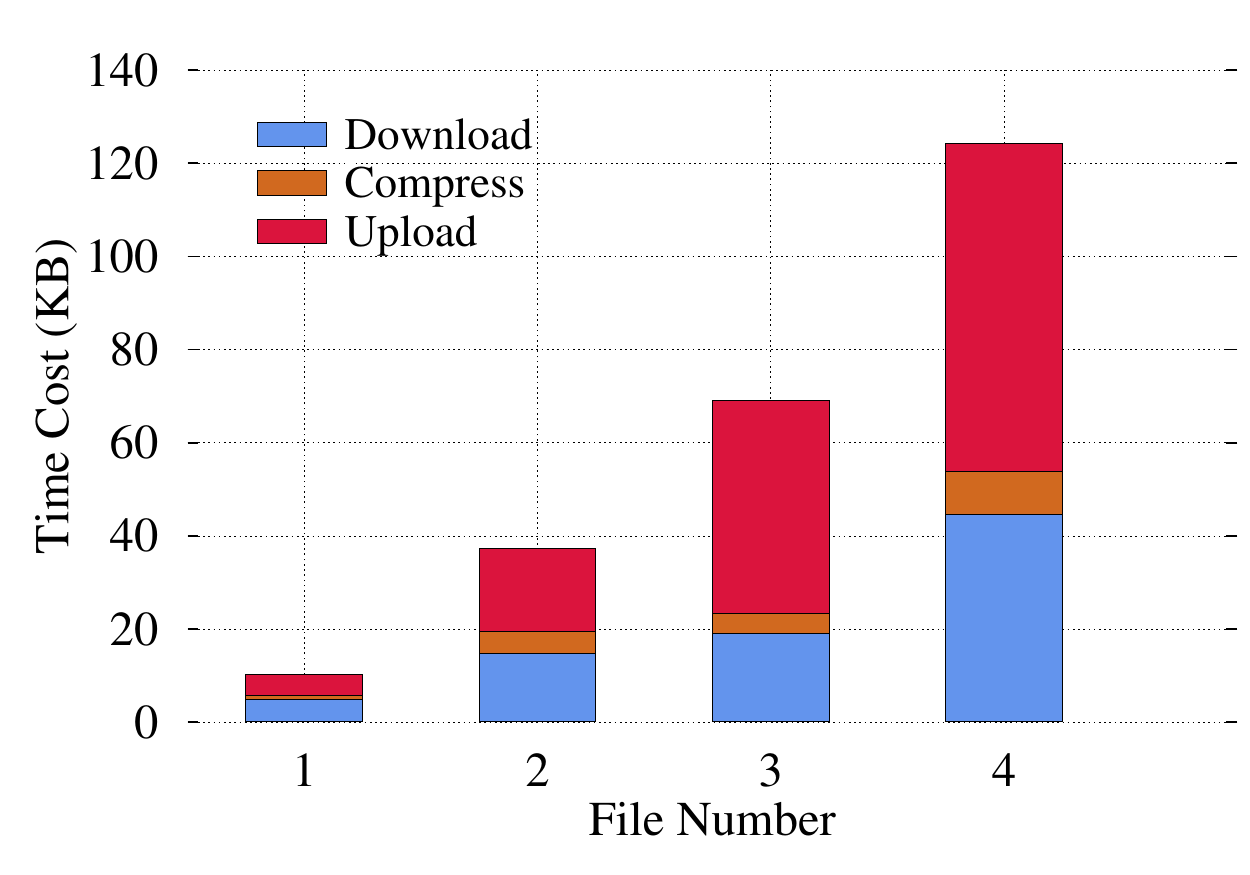}
\caption{Time cost of compressing files}\label{zip_file}
\end{minipage}
\begin{minipage}[t]{0.32\linewidth}
    \centering
\includegraphics[width=\linewidth]{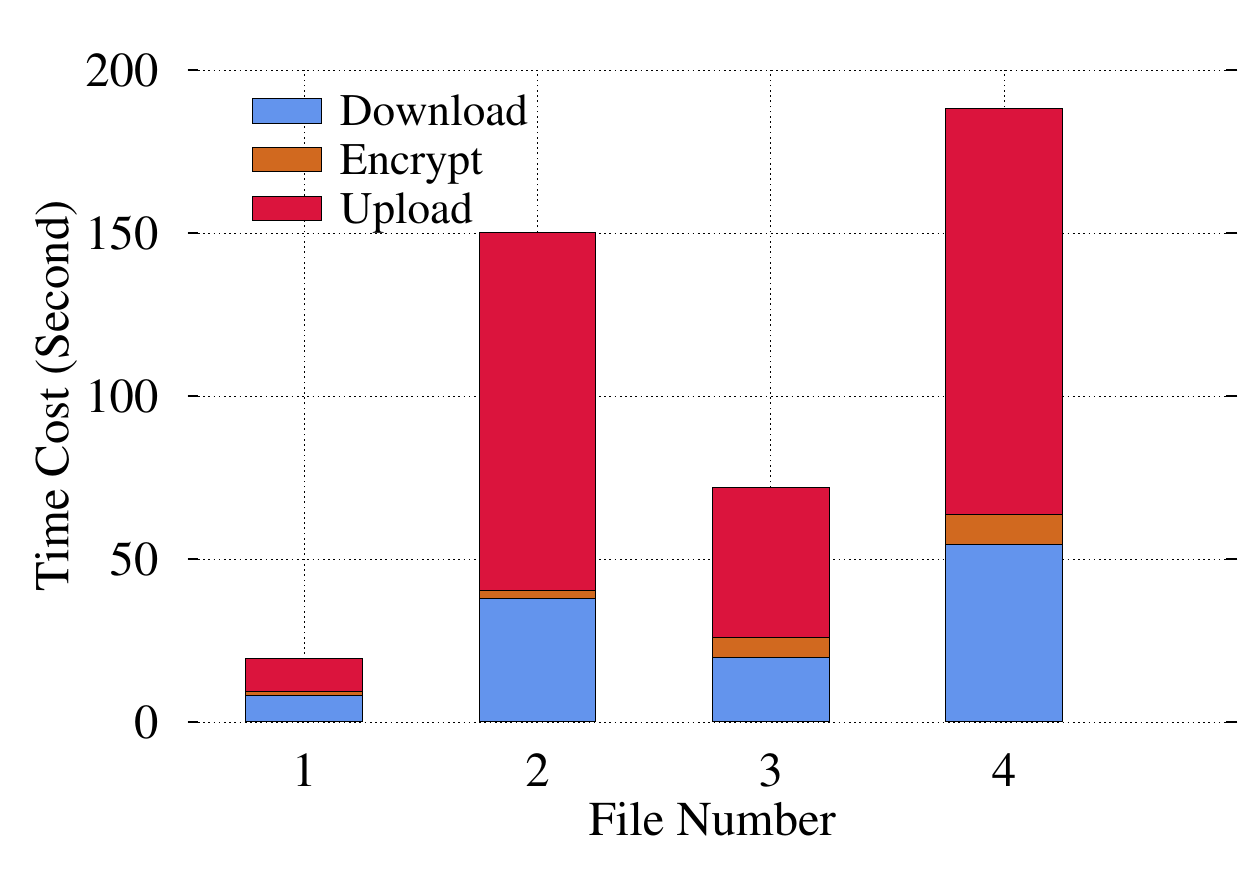}
\caption{Time cost of encrypting files }\label{encrypt_file}
\end{minipage}
\end{figure*}

In the rest of this subsection,
the performance overhead does not include the initial phase of starting an
instance, which is for the case of using shared instance. If the user starts
his own cloud instance, the extra overhead could range from 15 seconds to 30
seconds based on Fig.~\ref{start-instance}.

{\bf Overhead of downloading operation:} In this experiment, we let the cloud
instance download the files in our workload and upload them to our Dropbox
storage space. The target files are hosted in one of our servers. As Fig.~\ref{download_wes} shows, upload/download overhead is roughly
proportional to the file size. Uploading is faster than downloading because
the instance we use (AWS EC2) and the Dropbox service (AWS S3) belong to the
same cloud service provider. Overall, the transmitting rate is around 1MB
per second, which is much faster than downloading files to the
smartphone and then uploading them to Dropbox cloud storage via cellular
network.


{\bf Overhead of compressing operation:} In this experiment, we test
compression operations on Dropbox files. Particularly, we use gzip to
compress the files downloaded to the cloud instance and then upload the
compressed file back to Dropbox cloud storage. Fig.\ref{zip_file} depicts the
breakdown time overhead of this operation. Uploading costs the most, followed
by downloading and compression process. Downloading becomes faster than
uploading because the source files are hosted on Dropbox. This operation in
Skyfiles is faster compare to compressing files locally stored on smartphones.
For example, compressing one picture (16M) and 5 pictures (83M) costs 10.4s
and 38.7s in total. The compressed files in these two cases are 7.7MB and
40.0MB.

{\bf Overhead of encrypting operation:} In this experiment, we use 
OpenSSL~\cite{openssl} at the cloud side to
encrypt the targeted file with Triple DES algorithm. The overhead to complete 
the encryption is shown on Fig.~\ref{encrypt_file}.
As indicated by Fig~\ref{zip_file} and \ref{encrypt_file}, the encryption operation is more time consuming than the compression.
One reason lies in the fact that, after the compression, the file size is reduced. For example, the size of File Number 2 is reduced from 83MB to 40MB.
Another reason resulting in the longer processing time is that the encryption with Triple DES generates two files, one encrypted file
and one key file. After finishing the encryption at the cloud side, Skyfiles uploads the encrypted file to Dropbox and pushes the 
key file which is required to decrypt the file to the mobile device.
%

{\bf Bandwidth consumption:} The bandwidth consumption of advanced file
operations are similar as those of only control messages  exchanged.
Fig.~\ref{download_bandwidth},~\ref{compress_bandwidth},~\ref{encrypt_bandwidth} illustrate
the detailed bandwidth consumption of downloading, compression and encryption. 
Taking compression operation in Fig~\ref{compress_bandwidth} as an example, uploading cost including control messages is very
small, specifically 3.3KB, 4.8KB, 5.1KB and 5.8KB. The uploading bandwidth costs are 
similar on different files. However, Fig~\ref{zip_file} indicates that the uploading time varies among those files, 
4.6s, 17.9s, 45.8s, 70.4s, respectively. The bandwidth is not 
increasing along with the time. It can be attributed to that most of the uploading bandwidth is
consumed at the beginning of the process that is used 
to send command to login and control the cloud server.
The downloading
bandwidth varies across different file sizes. Most of the bytes are consumed by our
periodical heartbeat messages reporting the status of the operation. 
In Fig~\ref{encrypt_bandwidth}, the total bandwidth consumption is always lower than 300KB in our tests,
regardless of the target file size. In the conventional simple solution, the
smartphone has to obtain a local copy to support the advanced operations.
The bandwidth consumption, thus, is roughly double the target file size.

\begin{figure*}
\begin{minipage}[t]{0.32\linewidth}
\centering
\includegraphics[width=\linewidth]{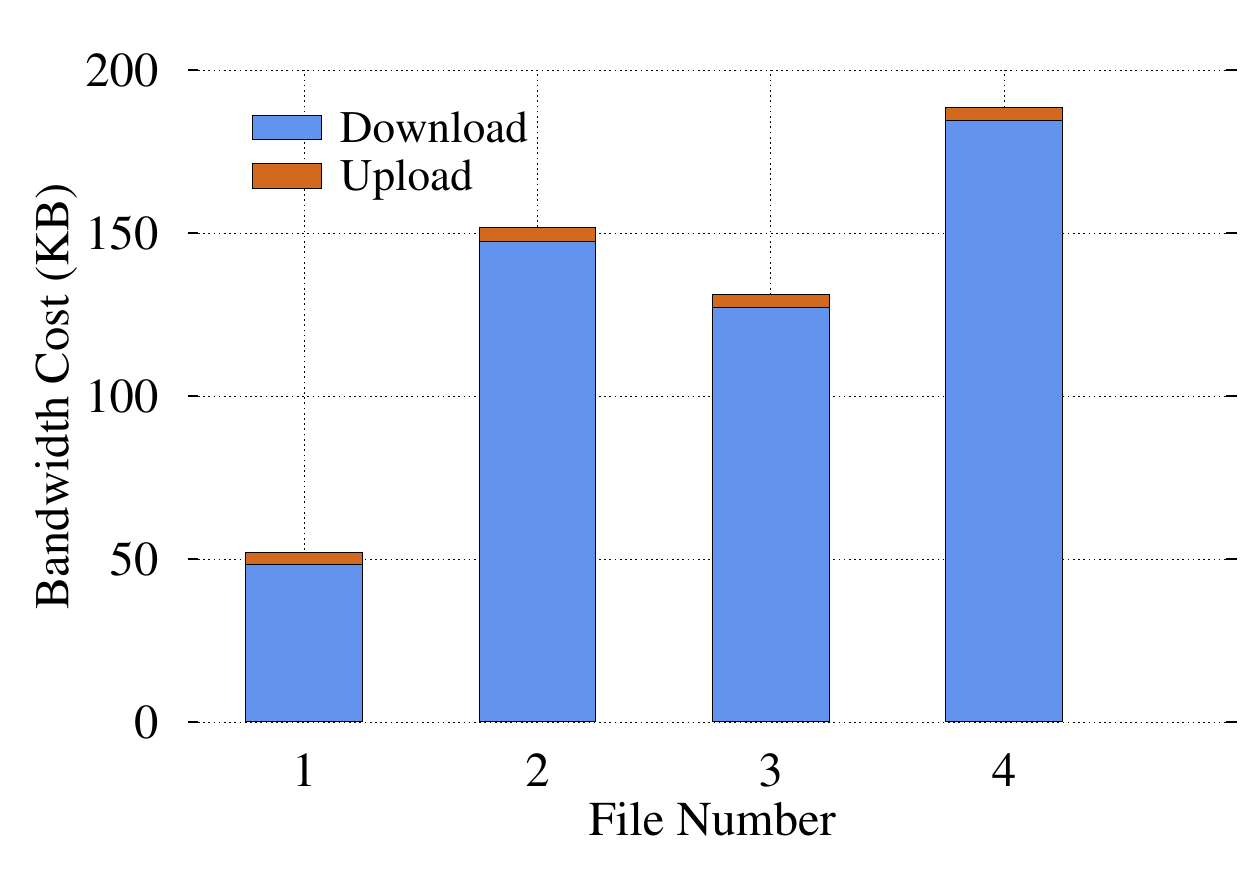}
\caption{Bandwidth cost of downloading files }\label{download_bandwidth}
\end{minipage}
\begin{minipage}[t]{0.32\linewidth}
\centering
\includegraphics[width=\linewidth]{figs/skyfile_compress_time-eps-converted-to.pdf}
\caption{Bandwidth cost of compressing files }\label{compress_bandwidth}
\end{minipage}
\begin{minipage}[t]{0.32\linewidth}
\centering
\includegraphics[width=\linewidth]{figs/skyfile_encrypt_time-eps-converted-to.pdf}
\caption{Bandwidth cost of encrypting files}\label{encrypt_bandwidth}
\end{minipage}
\end{figure*}

\subsection{File transfer between Users}
The performance of file transfer is similar to the above advanced file
operations. The process is identical to the downloading phase in {\em
compress operation} followed the uploading phase in {\em download
operation}, i.e., target files are downloaded by an instance from Dropbox
storage (sender's account) and then uploaded back to Dropbox storage
(receiver's account). The time overhead and bandwidth consumption are very
close to those in advanced file operations such as compression and encryption.
When using shared instances, the extra costs for communicating with the
trusted server is negligible compared to the total performance.

\begin{figure}[htbp]
\centering
\begin{minipage}[t]{0.48\linewidth}
\centering
\includegraphics[width=\linewidth]{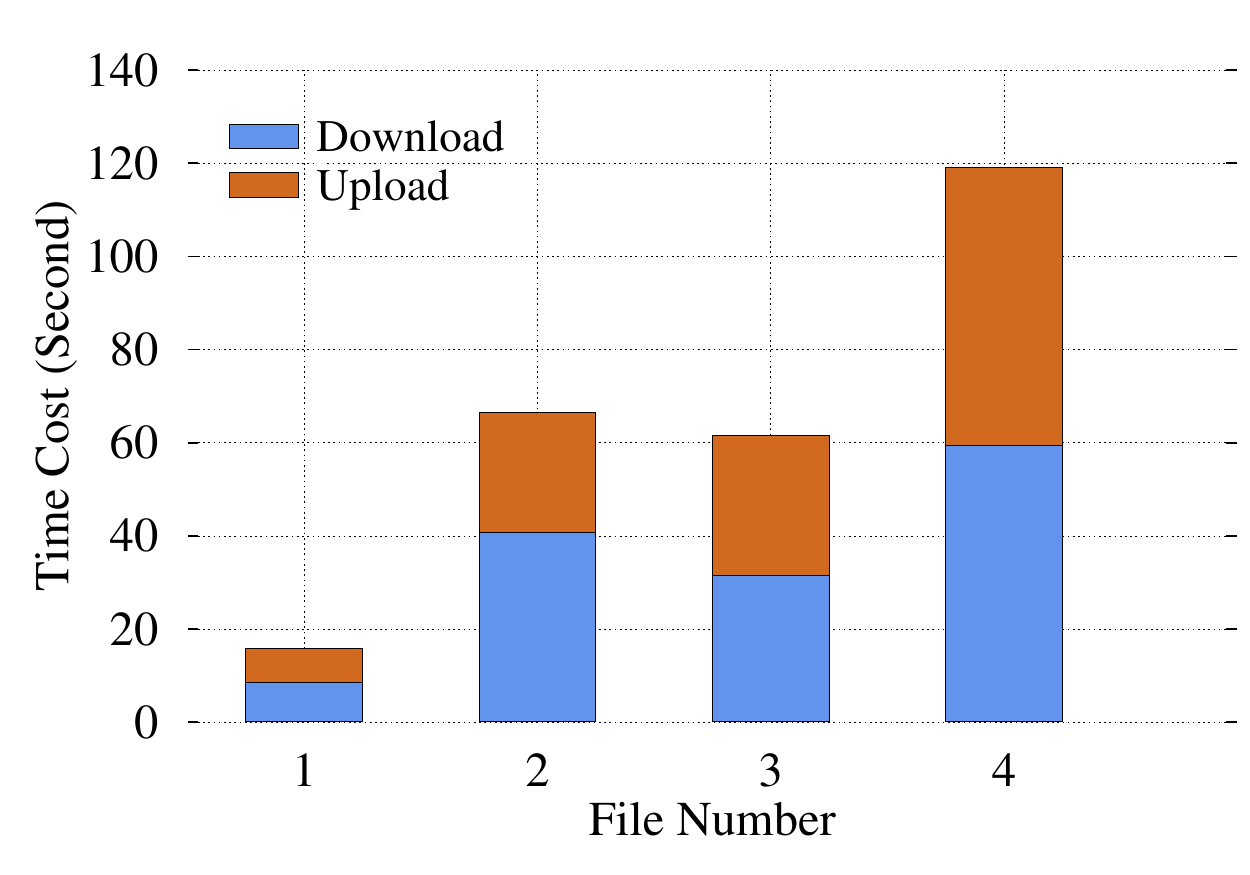}
\caption{Time cost of the file transfer (shared instances)}\label{share_time}
\end{minipage}
~
\begin{minipage}[t]{0.48\linewidth}
    \centering
\includegraphics[width=\linewidth]{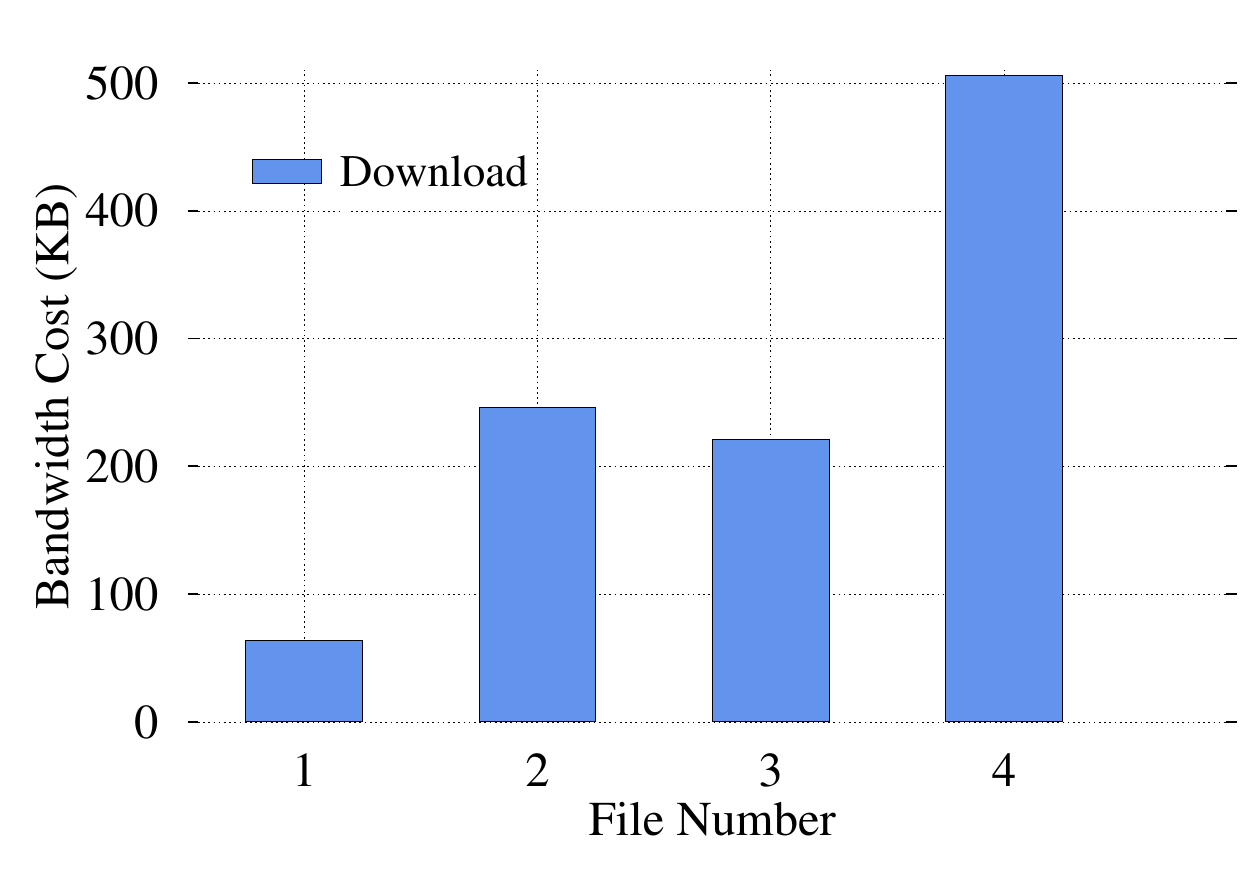}
\caption{Bandwidth cost of the file transfer (shared instances)}
\label{tab:file}
\end{minipage}
\vspace{-0.1in}
\end{figure}

\begin{table}[ht]
\centering
\begin{tabular}{|l|c|c|c|}
\hline
Workloads&Skyfiles&Traditional approach\\
\hline\centering
1&0.14\%&0.78\%\\
\hline
2&0.18\%&1.85\%\\
\hline
3&0.21\%&1.52\%\\
\hline
4&0.32\%&2.48\%\\
\hline
\end{tabular}
\caption{Power consumption of transferring files}
\label{tab:power}
\end{table}

Fig.~\ref{share_time} plots the time overhead of file transfer (both sender and receiver) with a shared
instance and Fig~\ref{tab:file} illustrates the bandwidth cost of the process on the sender. The binary service program hosted by the server is 4.92MB in our implementation. In the traditional approach, if the sender first fetches the file
to his smartphone and then transfers it to the receiver's smartphone, the
bandwidth consumption for both of them will be the same as the file size.
The time overhead will depend on not only the cellular network link quality
but also the smartphone-to-smartphone transfer protocol, e.g., NFC,
Bluetooth, or WiFi-Direct. The entire process is usually much slower relative to
Skyfiles's performance shown in Fig.~\ref{share_time}.

{\bf Power consumption:} The battery life is an important metric that directly affect
the user experience. We measure the Battery Percentage Usage (BPU) for transferring files operation on senders through 
a Nexus 5 smartphone.
The table~\ref{tab:power} shows the cost of BPU of Skyfiles and traditional approach under the LTE network.
As we can see from the table, Skyfiles outperforms traditional approach in all the tests. This is because that, in Skyfiles,
the actual file transfer is done on the cloud and the smartphones only need to exchange a small amount of control messages.

\section{Conclusion}
\label{sec:conclude}
In this paper, we develop Skyfiles, a cloud file management system, to help smartphone users execute 
operations on the files stored in cloud. 
The major objective of this system is to extend the boundary of operations so that current service providers support to a richer
operation set efficiently and securely. Skyfiles consists three components, Skyfiles Agent
Skyfiles Service Program and Skyfiles Server. On the mobile device side, Skyfiles Agent maintains a {\em shadow file system} that 
does not keep local copies and only stores the meta data, including file name, type and last modified time of the files.
On the cloud side, Skyfiles Service Program that residences in the cloud instance is used to provide various 
extended operations. Moreover, a centralized trusted server, Skyfiles Server, maintains the available instances. 
The system categorizes the user's request into basic and advance operations.
The basic operations, such as creating/deleting/renaming a file, are completed through the standard APIs defined by service providers.
The advance operations, like encryption and conversion, are achieved by Skyfiles Service Program.
To reduce the overhead and the cost of cloud instance, Skyfiles allows users to share the cloud instance with each other.
We propose the secure protocols to protect Skyfiles 
from disclosing users' security credential of their cloud storages when using shared cloud instance.
Skyfiles system implementation is evaluated on Android smartphones with Dropbox as a cloud storage platform and with
Amazon Web Service as a cloud instance provider.

\bibliographystyle{IEEEtran}
\bibliography{ref1}

\end{document}